\def\ps@headings{%
\def\@oddhead{\mbox{}\scriptsize\rightmark \hfil \thepage}%
\def\@evenhead{\scriptsize\thepage \hfil \leftmark\mbox{}}%
\def\@oddfoot{}%
\def\@evenfoot{}}
\providecommand{\algorithmname}{Algorithm}
\theoremstyle{plain}
\theoremstyle{plain}
\theoremstyle{remark}
\newtheorem{proposition}{Proposition}
\providecommand{\corollaryname}{Corollary}
\providecommand{\remarkname}{Remark}
\providecommand{\theoremname}{Theorem}
\begin{document}

\title{Network Coding Techniques in Cooperative Cognitive Networks}

\author{\IEEEauthorblockN{Athanasios Papadopoulos\IEEEauthorrefmark{1}, Nestor
D. Chatzidiamantis\IEEEauthorrefmark{1} and Leonidas Georgiadis\IEEEauthorrefmark{1}}
\IEEEauthorblockA{\IEEEauthorrefmark{1}Department
of Electrical and Computer Engineering, Aristotle University of Thessaloniki,
Thessaloniki, Greece \\ Emails:\{athanapg,nestoras,leonid\}@auth.gr}
}
\maketitle

\begin{abstract}
In this paper, we investigate transmission techniques for  a fundamental cooperative cognitive radio network, i.e., a radio system where a Secondary user may act as relay for messages sent by the Primary user, hence offering  performance improvement of Primary user transmissions, while at the same time obtaining more transmission opportunities for its own transmissions. Specifically, we examine the possibility of improving the overall system performance by employing network coding techniques. The objective is to achieve this while affecting Primary user transmissions only positively, namely: 1) avoid network coding operations at the Primary transmitter in order avoid increase of its complexity and storage requirements, 2) keep the order of packets received by the Primary receiver the same as in the non cooperative case and 3) induce packet service times that are stochastically smaller than packet service times induced in the non-cooperative case. A network coding algorithm is investigated in terms of achieved throughput region and it is shown to  enlarge Secondary user throughput as compared to the case where the Secondary transmitter acts as a simple relay, while leaving the Primary user stability region unaffected. A notable feature of this algorithm is that it operates without knowledge of channel and packet arrival rate statistics. We also investigate a second network coding algorithm which increases the throughput region of the system; however, the latter algorithm requires knowledge of channel and packet arrival rate statistics.

\end{abstract}

\section{Introduction}
Cognitive radio networks (CRNs) received considerable attention due to their potential for improving spectral efficiency \cite{A:Haykin}. The main idea behind CRNs is to allow unlicensed users, known as Secondary users, to identify spatially or temporally available spectrum and gain access to the underutilized shared spectrum, while maintaining limited interference to the licensed user, also known as Primary user.

Initial designs of CRNs assumed that no interaction between Primary and Secondary users exists (see \cite{A:DSA1,DSA2} and the references therein). Of particular interest are the works of \cite{Peng, Chen-2008-ID52} which addressed the problem of optimal spectrum assignment to multiple Secondary users and presented resource allocation algorithms based on either the knowledge of Primary user transmissions obtained from perfect spectrum sensing mechanisms \cite{Peng} or from a probabilistic maximum collision constraint with the Primary Users \cite{Chen-2008-ID52}. Furthermore, in this framework, an opportunistic scheduling policy was suggested in \cite{Urgaonkar-2009-ID47}, which offered maximization of throughput utility for the Secondary users while providing guarantees on the number of collisions with the Primary user, as well.

By allowing cooperation between Primary and Secondary users in CRNs, cooperative CRNs have emerged. Cooperative CRNs have gained attention due to their potential of providing benefits for both types of users, i.e., by allowing Secondary users to relay Primary user transmissions, the channel between the Secondary transmitter and Primary receiver is exploited, improving the effective transmission rate of the Primary channel which as a result becomes idle more often, hence providing more transmission opportunities to the Secondary users.

Due to their advantages cooperative CRNs have been studied in several research works. From an information theoretic perspective, cooperation between Secondary and Primary users at the Physical layer has been investigated in \cite{Goldsmith-2009-ID359}. Of particular interest are the works which conduct queuing theoretic analysis and transmission protocol design for cooperative CRNs \cite{Simeone-2007-ID29,Krikidis-2009-ID233,Kompella,Neely_cooper, Nestor_ours,kiwan2017stability}. A cooperation transmission protocol for CRNs where the Secondary user acts as a relay for Primary user transmissions was initially presented in \cite{Simeone-2007-ID29}, where the benefits of such cooperation for both types of users were investigated. In \cite{Krikidis-2009-ID233}, cooperative CRNs with multiple Secondary users were investigated and advanced relaying techniques which involved advanced Physical layer coding between Primary and Secondary transmissions were suggested. Stationary transmission policies that allow simultaneous Primary and Secondary user transmissions were designed and optimized in \cite{Kompella}, in terms of stable throughput region. Cooperation transmission policies which take into account the available power resources at the Secondary transmitter in order for the latter to decide whether to cooperate or not, have been presented in \cite{Neely_cooper, Nestor_ours}; in these works cooperation between Primary and Secondary users is treated in an abstract manner (when cooperation takes place, transmission success probability is modified) without addressing in detail how this cooperation is effected. Finally, a cooperative CRNs with an extra dedicated relay was investigated in \cite{kiwan2017stability} and the maximum Secondary user throughput for this setup was determined. It should be noted that the implementation of all these  transmission algorithms for cooperative CRNs requires the modification of certain Primary channel parameters (such as Primary transmitted power, transmitted codewords, order of Primary transmitter packets received by Primary receiver), compared to the non-cooperative case, in order for the cooperation between Primary and Secondary users to take place and the true benefits of cooperation to be revealed.

In this work we examine the possibility of employing network coding techniques at the MAC layer to further improve the performance of cooperative cognitive networks. We impose the requirement that the Primary transmitter implementation complexity is minimally affected; no network coding operations are imposed on the Primary transmitter and the only additional requirement compared to the no cooperation case is that the Primary transmitter listens to Secondary transmitter feedback. Moreover, the order of Primary channel packet reception remains unaltered, while the service times of Primary packets are strictly improved compared to the case when no cooperation takes place. As in previous works, the Secondary transmitter may act as relay for Primary transmitter packets, however, depending on MAC channel feedback the Secondary transmitter may send network-coded packets that permit the simultaneous reception by the Primary and Secondary receivers. Under the aforementioned constraints on Primary channel transmissions, we propose an algorithm that implements network coding in this setup and study its performance in terms of Primary-Secondary throughput region. The results show that the resulting throughput region is improved as compared both to the cases where no cooperation takes place and the case where the Secondary transmitter acts as a relay but does not perform network coding. A notable feature of the algorithm is that the only requirement for its operation is knowledge that the channel from Secondary transmitter to Primary receiver is better than the channel from Primary transmitter to Primary receiver. Next, we examine the possibility whether it is possible to further increase the throughput region of the system by employing more sophisticated network coding techniques. We present a generalization of the proposed algorithm and show that this is possible in certain cases. However, in this case, knowledge of channel erasure probabilities, as well as the arrival rate of Primary transmitter packets are crucial for the algorithm to operate correctly.

The remainder of the paper is organized as follows. Section \ref{sec:System-Model} presents the system model along with some queuing theoretic preliminary results that are used in the analysis throughout the paper. In Section \ref{sec:baseline}, two baseline algorithms are described which will be used as benchmarks, when compared with the proposed one. Section \ref{subsec:Basic-Network-Coding} describes the proposed transmission algorithm that is based on network coding. In Section \ref{sec:Improved} we present a generalization of the algorithm proposed in Section \ref{subsec:Basic-Network-Coding} and show that it has increased throughput region in certain cases. Finally, concluding remarks are provided in Section \ref{sec:Conclusions}.

\section{System Model}
\label{sec:System-Model}
We consider the four-node cognitive radio system model depicted in
Fig. \ref{Fig:system_model}. The system consists of two (transmitter,
receiver) pairs (1,3), (2,4). Pair (1,3) - odd numbers- represents
the primary channel. Node 1 is the primary transmitter who is the
licensed owner of the channel and transmits whenever it has data to
send to primary receiver, node 3. On the other hand, node 2 is the
secondary transmitter; this node does not have any licensed spectrum
and seeks transmission opportunities on the primary channel in order
to deliver data to secondary receiver, node 4.
\begin{itemize}
\item \emph{Time and unit of transmission model. }We consider the time-slotted
model, where time slot $t=0,1,...$ corresponds to time interval $[t,t+1)$;
$t$ and $t+1$ are called the ``beginning'' and ``end'' of slot
$t$ respectively. Information transmission consists fixed size bits of packets
whose transmission takes unit time. At the beginning of time
slot $t$, a random number $A_{1}(t)$ of packets arrive at node 1
with destination node 3, thereafter called packets of session $(1,3).$
These packets are stored in an infinite-size queue $Q_{1}$. We assume
that the random variables $\left\{ A_{1}\left(t\right)\right\} _{t=0}^{\infty}$
are independent and identically distributed with mean $\lambda_{1}=\mathbb{E}\left[A_{1}(t)\right].$
Node 2 has an infinite number of packets destined to node 4, stored
in queue $Q_{2},$ thereafter called packets of session $(2,4)$.
The latter assumption amounts to assuming that node 2 is overloaded
and is made in order to simplify and clarify the presentation;
%\textcolor{red}{as will be explained later,}
the algorithms presented still work and the results hold when
packet arrivals at node 2 are random.
\item \emph{Channel Model.} We consider the wireless broadcast channel,
i.e., that transmissions by node $i,\ i\in\left\{ 1,2\right\} $ may
be heard by the rest of the nodes. We adopt the broadcast erasure
channel model which efficiently describes communication at the MAC layer.
In this channel model, a transmission by node $i,\ i\in\left\{ 1,2\right\} $,
may either be received correctly by or erased at each of the other
nodes. Specifically, we make the following assumptions regarding the channel.
\begin{itemize}
\item \emph{Erasure events. }We assume that reception/erasure events are
independent across time slots, however, we allow for the possibility
that they be dependent within a given time slot. Specifically, For
a node $i\in\left\{ 1,2\right\} $, let $\left\{ Z_{k}^{i}(t)\right\}  _{t=1}^{\infty},k\in\left\{ 1,2,3,4\right\} $
be random variables, denoting erasure events, taking values 1 (a packet
transmitted by node $i$ is received by node $k$) and 0 (a packet
transmitted by node $i$ is erasure at node $i$). We define $Z_{i}^{i}\left(t\right)=1$
(a packet transmitted by node $i$ at time $t$ is known by node $i$)
and assume that the quadruples $\left(Z_{k}^{i}(t)\right)_{k=1}^{4},\ t=0,1,...$
are independent; however, for given $t,$ we allow for arbitrary dependence
between the random variables $Z_{k}^{i}\left(t\right),\ k\in\left\{ 1,2,3,4\right\}\ i\in {1,2} .$
We denote by $\epsilon_{{\mathcal S}}^{i},\ {\mathcal S}\subseteq\left\{ 1,2,3,4\right\} -\left\{ i\right\} $
the probability that a packet transmitted by node $i$ is erased at
all nodes in set ${\mathcal S}$. For simplicity we omit the brackets
when denoting specific sets in $\epsilon_{{\mathcal S}}^{i}.$ For example,
$\epsilon_{23}^{1}$ is the probability that a transmission by node
$1$ is erased at nodes $2,3$; the transmission may either be received
correctly or erased at node $4.$
\item \emph{Transmission scheduling}. We assume that simultaneous transmission
of packets by both transmitters results in loss of both packets; hence,
for useful transfer of information, only one of the transmitters must
be scheduled to transmit at any given time.
\item \emph{Channel feedback. }Upon reception or erasure of a packet, a
node sends respectively positive (ACK) or negative (NACK) acknowledgment
on a separate channel, which is heard by the rest of the nodes.
\item \emph{Channel sensing. }We assume that the Secondary transmitter can sense whether the Primary transmitter is sending a packet on the channel.
\end{itemize}
\end{itemize}
A main requirement in this setup is that node 2 transmissions must
either have no negative effect, or effect positively node 1 transmissions.
In the simplest case this can be achieved if transmitter 2 sends data
to receiver 4 only when transmitter 1 is idle. In this case, nodes
1 and 3 are effectively unaware of transmissions that take place between the secondary pair (2,4). However, if the
erasure probability from node 2 to node 3 is smaller than the one
from node 1 to node 3, i.e., $\epsilon_{3}^{2}<\epsilon_{3}^{1}$,
the possibility arises for improving the performance of both the primary
and the secondary channel by cooperation. Specifically, node 2 may
store packets sent by node 1 and erased by node 3 and then act as
a relay to transfer these packets to node 3. Since $\epsilon_{3}^{2}<\epsilon_{3}^{1}$
, this transfer will take shorter time. As a result the throughput
and packet delays for session (1,3) will improve and at the same time,
as long as $\lambda_{1}$ is not very high, node 1 will be idle for
a longer time and the throughput of packets for session (2,4) will
also increase.

In this work we examine the possibility of improving further the throughput
of packets of session (2,4) by allowing network coded transmissions
by node 2. We propose a network coding based algorithm according to which node 2 may transmit appropriate
combinations of packets destined to nodes 3, 4 which result in increased
throughput of packets of session (2,4). However, since node 1 is the
owner of the communication channel, in order to ensure that session
(1,3) transmissions are only positively affected we impose the following
requirements on the design of coding algorithms.

\textbf{Algorithm Design Requirements }
\begin{enumerate}
\item \label{enu:No-coding}No coding operations takes place at transmitter
node 1. Node 1 transmits its own packets based on the feedback received
by nodes 2, 3, 4, but does not receive/process any of the packets
transmitted by node 2.
\item \label{enu:OrderOfTrans}The order of packet transmission of session
(1,3) must be the same as in the case where no cooperation takes place.
\item \label{enu:The-service-time}The service time of each packet of session
(1,3) (i.e. the time interval between the time the packet is at the head of the queue on node 1 and the time the packet is successfully
received by node 3) must be ``smaller'' than the service time this
packet would have if no cooperation takes place. Specifically,
we require that if $S_{k}^{nc}$ ($S_{k}^{c}$) are the service times
of the $k$th session (1,3) packet when no cooperation (cooperation)
takes place, then $S_{k}^{c}$ is stochastically smaller than $S_{k}^{nc}$, that is,
\[
\Pr\left(S_{k}^{c}\geq x\right)\leq\Pr\left(S_{k}^{nc}\geq x\right),\ \mbox{for all \ensuremath{x\in[0,\infty)}}.
\]
\end{enumerate}
An algorithm that satisfies all three requirements stated above will
be called ``admissible''.

\begin{figure}
\centering\includegraphics[scale=0.4]{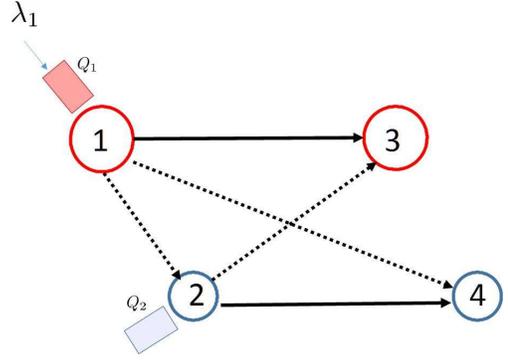}

\caption{The system model under consideration.}
\label{Fig:system_model}
\end{figure}

\subsection{Definitions and Preliminary Results}
\label{subsec:Prelim}

In the rest of this paper, for any storage element $X$ we denote
by $X(t)$ the number of packets in this element at time $t$.

A sequence of non-negative random variables $\left\{ Y\left(t\right)\right\} _{t=0}^{\infty}$
is stable if it converges in distribution to a proper random variable,
i.e,
\[
\lim_{t\rightarrow\infty}\Pr\left(Y(t)>M\right)=f(M)\ \mbox{for all \ensuremath{M\geq0},}
\]
and
\[
\lim_{M\rightarrow\infty}f\left(M\right)=0.
\]
One objective of the performance analysis of the algorithms to be
presented in the next sections is to determine the set of arrival
rates $\lambda_{1}$ for which the number of primary session (1,3)
packets \emph{in the system} at time $t$, denoted by $Q_{1}^{S}(t)$,
is stable. It will be seen that under all the algorithms discussed
in this paper, $Q_{1}^{S}\left(t\right)=Q_{1}(t)+F_{2}(t)$ where
$F_{2}(t)$ is a random variable taking values in $\left\{ 0,1\right\} $
and denotes the number of session (1,3) packets that may be located
at node 2. Also, under all algorithms $Q_{1}^{S}(t)$ can be seen
as the queue size of a discrete time queue where packets have independent identically distributed (i.i.d.)
service times with general distribution with mean $\bar{S}_{1}=1/\mu_{1}$.
Discrete time queues of this type have been studied in \cite{bruneel1993performance}
where it is shown that $Q_{1}^{S}\left(t\right)$ is stable when
\begin{equation}
0\leq\lambda_{1}<\mu_{1}.\label{eq:stabcond}
\end{equation}
Moreover, the average length of the busy and idle periods of $Q_{1}^{S}(t)$
are given respectively by,
\begin{eqnarray}
B_{1} & = & \frac{\lambda_{1}/\mu_{1}}{\left(1-\lambda_{1}/\mu_{1}\right)(1-q_{0})},\label{eq:BusyPeriod}\\
I_{1} & = & \frac{1}{(1-q_{0})},\label{eq:IdlePeriod}
\end{eqnarray}
where $q_{0}=1-\Pr\left(A_{1}(t)=0\right)$.

Let $R_{i}(t),\ i=1,2$ be the number of packets of session ($i,i+2$)
received by node $i+2$ during time slot $t$. The throughput $r_{i}$
of session ($i,i+2),\ i=1,2$ is defined as
\begin{equation}
r_{i}=\lim_{t\rightarrow\infty}\frac{1}{t}\sum_{\tau=0}^{t-1}R_{i}(t).\label{eq:thrptDef}
\end{equation}
It will be seen that for the algorithms discussed in this paper the
limit in (\ref{eq:thrptDef}) exists.

The objective of the algorithms presented in the next section is to
evaluate the maximum rate $r_{2}$ of session (2,4) packets that can
be obtained for given $\lambda_{1}$ satisfying condition (\ref{eq:stabcond});
under the latter condition, it is well known that it holds, $\lambda_{1}=r_{1}.$
The closure of the set of pairs $\left(r_{1},r_{2}\right)$ that can
be obtained under an algorithm is called ``throughput region'' of
the algorithm and is denoted by ${\mathcal R}.$

Next we present a \emph{generic queueing system} that will be used
for the performance analysis of the algorithms to be described
in the next sections. Consider a slotted time system with the following
structure. There are random time instants $T_{k},\ k=1,2,$ forming
a renewal process, i.e., $G_{k}=T_{k+1}-T_{k}\geq1$ are i. i. d.
with finite expectation. A random number $H_{k}$ of the slots in
the time interval $[T_{k},T_{k+1})$ are available for transmitting
the packets that are in the queue when this interval starts; the rest
of the slots are not available. Also, in the time interval $[T_{k},T_{k+1})$
a random number $A_{k}$ of packets arrives at the queue at various times; these packets are stored in an infinite size queue
and can be served at or after slot $T_{k}$. The $l$th arriving packet
needs $S_{l}$ of the available slots in order to be transmitted successfully.
The random variables $\left\{ A_{k}\right\} _{k=0}^{\infty},$$\ \left\{ H_{k}\right\} _{k=0}^{\infty},\ \left\{ S_{l}\right\} _{l=1}^{\infty}$
are i.i.d, and independent of each other, with finite expectations.
Let $r$ be the throughput of packets served by this queue. Using
arguments similar to those in \cite[Section 2]{neely2010stochastic}
it can be shown that
\begin{equation}
{\rm If}\ \mathbb{E}\left[A_{0}\right]\leq\frac{\mathbb{E}\left[H_{0}\right]}{\mathbb{E}\left[S_{1}\right]}\ {\rm then}\ r=\frac{\mathbb{E}\left[A_{0}\right]}{\mathbb{E}\left[G_{0}\right]}.\label{eq:stab2}
\end{equation}
\begin{equation}
{\rm If\ }\mathbb{E}\left[A_{0}\right]>\frac{\mathbb{E}\left[H_{0}\right]}{\mathbb{E}\left[S_{1}\right]},\ {\rm then\ }r=\frac{\mathbb{E}\left[H_{0}\right]}{\mathbb{E}\left[G_{0}\right]\mathbb{E}\left[S_{1}\right]}\label{eq:stab3}.
\end{equation}
A special case of this system is the discrete time queue in \cite{bruneel1993performance}
which is obtained by setting $T_{k}=k$,  $G_{k}=1$, and $H_{k}=1.$
\section{Baseline Algorithms}
\label{sec:baseline}
In this section we describe two baseline algorithms. The first involves no cooperation while in the second the secondary transmitter may be used as relay for session $(1,3)$ packets, but performs no network coding operations.

\subsection{No Cooperation }

Algorithm \ref{alg:NC-Alg} is very simple and requires no cooperation
between the Primary and Secondary users.

\begin{algorithm}[t]
\caption{\label{alg:NC-Alg}No cooperation Algorithm}

\begin{enumerate}
\item If $Q_{1}$ is nonempty, node 1 (re)transmits the packet at the head
of $Q_{1}$ until it is received by node 3.
\item \label{enu:AlgNC-st2}If $Q_{1}$ is empt,y node 2 (re)transmits the head of $Q_{2}$ packet until it is received by node 4.
\end{enumerate}
\end{algorithm}

To obtain the throughput region of Algorithm \ref{alg:NC-Alg}, observe
first that $Q_{1}$ is a discrete time queue in which the service
time of each packet is geometrically distributed with parameter $1-\epsilon_{3}^{1}$,
i.e., $\mu_{1}=1-\epsilon_{3}^{1}$. Hence, according to (\ref{eq:stabcond}),
this queue is stable when $0\leq\lambda_{1}<1-\epsilon_{3}^{1}$,
and according to (\ref{eq:BusyPeriod}), (\ref{eq:IdlePeriod}), the
average lengths of the busy and idle periods of $Q_{1}$ are given
respectively by,
\begin{eqnarray*}
B_{1} & = & \frac{\lambda_{1}/\left(1-\epsilon_{3}^{1}\right)}{\left(1-\lambda_{1}/\left(1-\epsilon_{3}^{1}\right)\right)(1-q_{0})},\\
I_{1} & = & \frac{1}{(1-q_{0})}.
\end{eqnarray*}
Since according to item \ref{enu:AlgNC-st2} of Algorithm \ref{alg:NC-Alg},
node 2 transmits session (2,4) packets during the idle periods of
queue $Q_{1},$ it can be shown based on arguments from regenerative
theory \cite{Wolff89} that
\begin{equation}
r_{2}=\frac{I_{1}\left(1-\epsilon_{4}^{2}\right)}{B_{1}+I_{1}}=\left(1-\frac{r_{1}}{1-\epsilon_{3}^{1}}\right)\left(1-\epsilon_{4}^{2}\right).\label{eq:NoCoopThroughput}
\end{equation}
Since any throughput for session (2,4) smaller that the one in (\ref{eq:NoCoopThroughput})
can also be achieved (the algorithm may simply refrain from transmitting
in certain slots), we see that the throughput region of Algorithm
\ref{eq:NoCoopThroughput} is
\[
{\mathcal R}_{\ref{alg:NC-Alg}}=\left\{ (r_{1},r_{2})\geq\boldsymbol{0}:\ \frac{r_{1}}{1-\epsilon_{3}^{1}}+\frac{r_{2}}{1-\epsilon_{4}^{2}}\leq1\right\} .
\]

\subsection{Simple Forwarding }

The algorithms presented in this and the following sections are admissible
when the channel from node 2 to node 3 is ``better'' that the channel
from node 1 to node 3. Specifically we assume for the rest of this work that
\begin{equation}
\epsilon_{3}^{1}\geq\epsilon_{3}^{2}.\label{eq:ErasureBasicAssumption}
\end{equation}
While the algorithms to be presented are operational even if condition
(\ref{eq:ErasureBasicAssumption}) is not satisfied, they are not
admissible because they violate item \ref{enu:The-service-time} of
Algorithm Design Requirements presented in Section \ref{sec:System-Model}.

In \cite{Krikidis-2009-ID233}, Algorithm \ref{alg:Algorithm-Presented-in} was presented.
\begin{algorithm}[t]
\caption{\label{alg:Algorithm-Presented-in}Algorithm Presented in \cite{Krikidis-2009-ID233} }

\begin{enumerate}
\item If $Q_{1}$ is nonempty, node 1 (re)transmits the packet at the head
of $Q_{1}$ until it is received by either node 2 or node 3.
\begin{enumerate}
\item If the packet is received by node 2 and erased at node 3, it is stored
in a queue $H_{2}$ at node 2.
\end{enumerate}
\item If $Q_{1}$ is empty and $H_{2}$ nonempty, node 2 (re)transmits the
packet at the head of queue $H_{2}$ until it is received by node
3.
\item If $Q_{1}$ and $H_{2}$ are empty, node 2 (re)transmits the packet
at the head of queue $Q_{2}$ until it is received by node 4.
\end{enumerate}
\end{algorithm}
This algorithm is not admissible since it violates items \ref{enu:OrderOfTrans} and \ref{enu:The-service-time}
of Algorithm Design Requirements presented in Section \ref{sec:System-Model}.
However, a slight modification presented in Algorithm \ref{alg:BA}
makes this algorithm admissible.

\begin{algorithm}[t]
\caption{\label{alg:BA}Simple Forwarding Algorithm}

In this algorithm, node 2 maintains a single-packet buffer $B_{2}$.
The algorithm
then operates as follows
\begin{enumerate}
\item If $Q_{1}$ is nonempty and $B_{2}$ is empty, node 1 (re)transmits
the packet at the head of $Q_{1}$ until it is received by either
node 2 or node 3.
\begin{enumerate}
\item If the packet is received by node 2 and erased at node 3, it is stored
in buffer $B_{2}$ at node 2.
\end{enumerate}
\item If $B_{2}$ is nonempty, node 2 (re)transmits the single packet in
$B_{2}$ until it is received by node 3.
\item If $Q_{1}$ and $B_{2}$ are empty, node 2 (re)transmits the packet
at the head of queue $Q_{2}$ until it is received by node 4.
\end{enumerate}
\end{algorithm}

The main difference of Algorithm \ref{alg:BA} from Algorithm \ref{alg:Algorithm-Presented-in}
is that if a session (1,3) packet is received by node 2 and erased
at node 3, then node 2 starts re-transmitting immediately the packet
instead of storing it in a buffer and transmitting it when $Q_{2}$
becomes empty. This modification makes the algorithm admissible. Indeed,
items \ref{enu:No-coding}, \ref{enu:OrderOfTrans} of Algorithm Design
Requirements are obviously satisfied. Item \ref{enu:The-service-time}
is also satisfied, as stated in the next proposition.
\begin{proposition}
\label{prop:StochDominance}Algorithm \ref{alg:BA} satisfies item
\ref{enu:The-service-time} of Algorithm Design Requirements, i.e,
if $S_{k}^{nc}$ ($S_{k}^{c}$) are the service times of the $k$th
session (1,3) packet when no cooperation (cooperation) takes place,
then $S_{k}^{c}$ is stochastically smaller than $S_{k}^{nc}.$
\end{proposition}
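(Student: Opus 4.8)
The plan is to prove stochastic dominance by constructing an explicit coupling of the two systems (cooperation and no-cooperation) on a common probability space, driven by the same underlying erasure random variables, so that one can compare the service times of the $k$th session $(1,3)$ packet sample-path by sample-path. Recall that the service time is the number of slots between the moment the packet reaches the head of $Q_1$ and the moment it is successfully received by node $3$. In the no-cooperation case, the packet is transmitted solely by node $1$, so $S_k^{nc}$ is the number of slots until the first success on the $1\to 3$ link, i.e. a geometric random variable with parameter $1-\epsilon_3^1$, using the erasure events $\{Z_3^1(t)\}$. In the cooperation case under Algorithm \ref{alg:BA}, node $1$ transmits until the packet is received by \emph{either} node $2$ or node $3$; if node $3$ receives it the packet is already served, while if only node $2$ receives it, node $2$ takes over and retransmits on the $2\to 3$ link until success.

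First I would fix a packet and set up the coupling so that the cooperation system and the non-cooperation system observe the same sequence of $1\to 3$ erasure events $Z_3^1(t)$ during the phase in which node $1$ is transmitting. The key observation is that $S_k^c$ can be decomposed: the packet leaves node $1$ at the first slot where $Z_2^1=1$ or $Z_3^1=1$; if that exit is due to $Z_3^1=1$, then $S_k^c$ equals exactly the same first-success-time that governs $S_k^{nc}$; if instead the exit is due to reception at node $2$ (with erasure at node $3$) strictly before the first $1\to 3$ success, then the remaining service is a fresh geometric on the $2\to 3$ link with parameter $1-\epsilon_3^2$. The point is that in this second scenario the packet has already \emph{left node $1$ earlier} than it would have departed in the non-cooperation case, and then completes on a stochastically faster link because of assumption (\ref{eq:ErasureBasicAssumption}), namely $\epsilon_3^1 \geq \epsilon_3^2$, which makes the $2\to 3$ geometric stochastically smaller than the $1\to 3$ geometric.

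The cleanest way I would carry this out is to show $S_k^c \leq_{st} S_k^{nc}$ by first conditioning. Let $N$ be the first slot (counting from when the packet is head-of-line) at which the $1\to 3$ transmission succeeds; then $S_k^{nc}=N$ and $N$ is geometric$(1-\epsilon_3^1)$. On the event that node $2$ does not capture the packet strictly before slot $N$, the cooperation system behaves identically and $S_k^c = N = S_k^{nc}$. On the complementary event, node $2$ captures the packet at some slot $m<N$, and thereafter $S_k^c = m + G'$ where $G'$ is an independent geometric$(1-\epsilon_3^2)$; I would argue that $m + G'$ is stochastically no larger than $N$ by using a second coupling on the $2\to 3$ link together with the memorylessness of the geometric and the inequality $1-\epsilon_3^2 \geq 1-\epsilon_3^1$. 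Combining the two events, and using that a mixture of laws each stochastically dominated by the law of $N$ is itself dominated by the law of $N$, yields
\[
\Pr\left(S_k^c \geq x\right) \leq \Pr\left(S_k^{nc}\geq x\right)\quad\text{for all } x\in[0,\infty),
\]
which is precisely item \ref{enu:The-service-time}.

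The main obstacle I anticipate is the bookkeeping in the second scenario: when node $2$ captures the packet at slot $m<N$, the residual $1\to 3$ process is \emph{conditioned} on having failed at node $3$ through slot $m$, so one cannot naively compare $G'$ to $N$ without care. The resolution is to exploit the memorylessness of the geometric distribution, which guarantees that the residual time-to-success on the $1\to3$ link after slot $m$ is again geometric$(1-\epsilon_3^1)$ and independent of the past; then the comparison reduces to the clean statement that geometric$(1-\epsilon_3^2)$ is stochastically smaller than geometric$(1-\epsilon_3^1)$, which is immediate from (\ref{eq:ErasureBasicAssumption}). A subtle point worth stating explicitly is that the departure event from node $1$ depends on the \emph{joint} law of $(Z_2^1(t),Z_3^1(t))$, and since we allow arbitrary within-slot dependence between these variables, the coupling must be built directly on the quadruples $(Z_k^1(t))_{k=1}^4$ rather than on marginals; once this is respected, the argument above goes through.
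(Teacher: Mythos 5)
Your proposal is correct, and it rests on the same underlying idea as the paper (a coupling that reduces everything to $\epsilon_3^1\geq\epsilon_3^2$), but the execution is genuinely different. The paper builds a single global pathwise coupling: it introduces auxiliary i.i.d.\ Bernoulli variables $\theta(t)$ with $\Pr(\theta(t)=0)=\epsilon_3^2/\epsilon_3^1$ (well defined precisely because of (\ref{eq:ErasureBasicAssumption})) and defines a surrogate $2\to3$ success indicator $\hat{J}(t)$ that dominates $\hat{Z}_3^1(t)$ slot by slot while having the correct marginal $\Pr(\hat{J}(t)=0)=\epsilon_3^2$; this yields an almost-sure inequality $\hat{S}^c\leq\hat{S}^{nc}$ on one probability space, with no conditioning at all. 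You instead partition on the capture time $m$ of the packet by node 2, use across-slot independence to argue that the residual $1\to3$ first-success time after slot $m$ is again geometric$(1-\epsilon_3^1)$, and compare it to an independent geometric$(1-\epsilon_3^2)$; this is valid, and your explicit warning that the coupling must respect the joint within-slot law of $(Z_2^1(t),Z_3^1(t))$ is exactly the right caution (only the across-slot independence and the marginal of $Z_3^1$ are needed for the residual process). The one place your write-up is loose is the final recombination step: what you need, and what your construction actually delivers, is \emph{conditional} dominance on each piece of the partition, i.e.\ $\Pr(S^c\geq x\mid E_m)\leq\Pr(S^{nc}\geq x\mid E_m)$ for every capture event $E_m$, which sums to the unconditional inequality; the statement "a mixture of laws each dominated by the law of $N$ is dominated by the law of $N$" is not literally what is used (for large $m$ the conditional law of $m+G'$ is certainly not dominated by the unconditional law of $N$). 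The trade-off between the two routes: the paper's thinning construction avoids all conditioning bookkeeping and gives a sample-path inequality in one stroke, while your decomposition avoids inventing the auxiliary randomization and makes the role of memorylessness and of the geometric comparison completely explicit.
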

\begin{proof}
The proof is given in Appendix \ref{sec:Proof-of-PropStochDom}
\end{proof}
Since the only difference between Algorithms \ref{alg:Algorithm-Presented-in}
and \ref{alg:BA} is the order in which packets are transmitted, the
maximum stable arrival rate of primary session (1,3) and the maximum
throughput of secondary session (2,4) packets are the same under both
algorithms and as shown in \cite{Krikidis-2009-ID233} they are given
by the formulas below.
\begin{equation}
0\leq r_{1}<\frac{\left(1-\epsilon_{3}^{2}\right)\left(1-\epsilon_{23}^{1}\right)}{1-\epsilon_{3}^{2}+\epsilon_{3}^{1}-\epsilon_{23}^{1}},\label{eq:stabRate3}
\end{equation}

\begin{equation}
r_{2}\leq\left(1-r_{1}\frac{1-\epsilon_{3}^{2}+\epsilon_{3}^{1}-\epsilon_{23}^{1}}{\left(1-\epsilon_{3}^{2}\right)\left(1-\epsilon_{23}^{1}\right)}\right)\left(1-\epsilon_{4}^{2}\right).\label{eq:Throughput3}
\end{equation}
Hence the throughput region of Algorithm \ref{alg:BA} is
\[
{\mathcal R}_{\ref{alg:BA}}=\left\{ \left(r_{1},r_{2}\right)\geq\boldsymbol{0}:\ \frac{1-\epsilon_{3}^{2}+\epsilon_{3}^{1}-\epsilon_{23}^{1}}{\left(1-\epsilon_{3}^{2}\right)\left(1-\epsilon_{23}^{1}\right)}r_{1}+\frac{r_{2}}{1-\epsilon_{4}^{2}}\leq1\right\}.
\]

In Figure \ref{fig:The-region-under1,3} we plot the regions ${\mathcal R}_{\ref{alg:NC-Alg}},\ {\mathcal R}_{\ref{alg:BA}}$
when $\epsilon_{3}^{1}=.8$ and $\epsilon_{2}^{1}=\epsilon_{3}^{2}=\epsilon_{4}^{2}=.2$
and all erasure events are independent (hence $\epsilon_{23}^{1}=\epsilon_{2}^{1}\times\epsilon_{3}^{1}=.16$).
We see that by employing Algorithm \ref{alg:BA} the stable arrival
rate for the primary channel increases from .2 to .45. Moreover, for
any value of $r_{1}$ smaller than .45 the maximum throughput for
the secondary channel increases significantly. In the next section
we develop algorithms that increase further the throughput region
of the system by employing network coding techniques.

\begin{figure}
\includegraphics[scale=0.46]{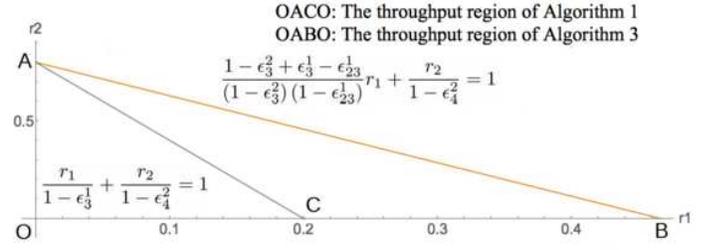}

\caption{\label{fig:The-region-under1,3}The throughput regions of Algorithms
\ref{alg:NC-Alg} and \ref{alg:BA}}
\end{figure}

\section{Proposed Network Coding Algorithm}

%\subsection{Basic Network Coding Algorithm}
\label{subsec:Basic-Network-Coding}

In this section we propose an admissible scheduling algorithm that
at appropriate times, depending on events occurring during the operation,
transmits network-coded packets. The proposed algorithm is admissible
and enhances the maximum throughput of secondary session (2,4), while
leaving the maximum throughput of primary session (1,3) achieved by
Algorithm \ref{alg:BA} unaltered. For the operation of the proposed
algorithm, the following structures are maintained at the nodes.
\begin{enumerate}
\item Two single-packet buffers at node 2, denoted as\footnote{For easy reference we use the following convention in the notation: storage element
$X_{i,k\bar{l}}^{j}$ holds packets located at node $i,$ that have
been transmitted by node $j,$ are received by node $k$ and erased
by node $l$. If the superscript is missing, e.g., $X_{i,k\bar{l}}$
then $X_{i,k\bar{l}}$ holds packets that originated at node $i$.} $B_{2,\bar{3}\bar{4}}^{1}$ and $B_{2,\bar{3}4}^{1},$ for storing
certain packets of session (1,3) transmitted by node 1 and received
by node 2. Buffer $B_{2,\bar{3}\bar{4}}^{1}$ holds packets that are
received by node 2 and erased at 3, 4 . Buffer $B_{2,\bar{3}4}^{1}$
holds packets that are received by nodes 2, 4 and erased at node 3.
The operation of the algorithm ensures that these buffers hold at
most one packet. Moreover, at most one of these buffers may be nonempty
at the beginning of each time slot.
\item One infinite-size queue at node 2, denoted as $Q_{2,3\bar{4}}$, for
storing packets of session (2,4) transmitted by node 2, received by
node 3 and erased at node 4.
\item One single-packet buffer and node 4, denoted as $B_{4,\bar{3}}^{1}$,
for storing packets of session (1,3) transmitted by node 1, erased
at node 3 and received by node 4. The operation of the algorithm ensures
that if buffer $B_{2,\bar{3}4}^{1}$ contains one packet, this packet
is also stored in $B_{4,\bar{3}}^{1}$ at node 4.
\item One infinite-size queue at node 3, denoted as $Q_{3,\bar{4}}^{2}$,
for storing packets of session (2,4) transmitted by node 2, erased
at node 4 and received by node 3. The operation of the algorithm ensures
that the contents of $Q_{3,\bar{4}}^{2}$ are the same as those of
$Q_{2,3\bar{4}}$.
\end{enumerate}
Next we present the details of the operation of the algorithm. Depending
on the status of a transmitted packet at each of the nodes (reception
or erasure) various actions are taken by the nodes. Since each node
sends (ACK, NACK) feedback that is heard by the rest of them, the
nodes are able to perform the actions required by the algorithm. In addition the state of $Q_{1}$ (empty or nonempty) can be
obtained by node 2 by sensing the channel.
\begin{itemize}
\item If $Q_{1},$ $B_{2,\bar{3}\bar{4}}^{1}$ , $B_{2,\bar{3}4}^{1}$ are
all empty, implying that there are no session (1,3) packets in the
network, node 2 (re)transmits the packet at the head of $Q_{2}$ until
it is received by at least one of the nodes 3, 4. If the packet is
received by node 4, it is removed from $Q_{2}.$ If the packet is
received node 3 and erased at node 4, it is removed from $Q_{2}$
and placed in $Q_{2,3\bar{4}}$; also, node 3 stores the packet in
$Q_{3,\bar{4}}^{2}$. As will be explained shortly, the packets stored
in $Q_{2,3\bar{4}}$ are candidates for network coding and are used
by node 2 to form network-coded packets during the times that $Q_{1}$
is nonempty.
\item If queue $Q_{1}$ is nonempty and buffers $B_{2,\bar{3}\bar{4}}^{1}$
and $B_{2,\bar{3}4}^{1}$ are empty, which implies that no packet
of session (1,3) is stored at node 2, node 1 (re)transmits the packet
at the head of $Q_{1}$ until it is received by at least one of the
nodes 2, 3, say at time $t$. During this process, if node 4 receives
the transmitted packet, it stores it in buffer $B_{4,\bar{3}}^{1}$.
If at time $t$ the transmitted packet is received by node 3, the
packet at node $B_{4,\bar{3}}^{1}$ (if any) is removed. If at time
$t$ the packet is erased at node 3 and received by node 2, the packet
is placed in $B_{2,\bar{3}\bar{4}}^{1}$ if $B_{4,\bar{3}}^{1}$ is
empty (i.e., node 4 has not received the packet), and in $B_{2,\bar{3}4}^{1}$ otherwise; in this case, the packet
is removed form $Q_{1}$ and node 2 starts the attempt to deliver
the packet (stored in one of the buffers $B_{2,\bar{3}\bar{4}}^{1},\ B_{2,\bar{3}4}^{1}$
) until it is received by node 3 as described next. Observe that at
time $t$ only one of $B_{2,\bar{3}\bar{4}}^{1}$ and $B_{2,\bar{3}4}^{1}$
can be nonempty. Moreover, if $B_{2,\bar{3}4}^{1}$ is nonempty,
$B_{4,\bar{3}}^{1}$ contains the same packet.
\begin{itemize}
\item If $B_{2,\bar{3}\bar{4}}^{1}$ is nonempty (hence $B_{2,\bar{3}4}^{1}$
is empty), node 2 transmits the packet in $B_{2,\bar{3}\bar{4}}^{1}$
until it is received by at least one of the nodes 3, 4. At this time,
the packet is removed from $B_{2,\bar{3}\bar{4}}^{1}$. Moreover,
if the packet is erased by node 3 and received by node 4, it is moved
to $B_{2,\bar{3}4}^{1}$ and it is also placed in $B_{4,\bar{3}}^{1}.$
We see again that at time $t$ only one of $B_{2,\bar{3}\bar{4}}^{1}$
and $B_{2,\bar{3}4}^{1}$can be nonempty.
\item If $B_{2,\bar{3}4}^{1}$ is nonempty (hence $B_{2,\bar{3}\bar{4}}^{1}$
is empty and $B_{4,\bar{3}}^1$ is nonempty) then,
\begin{itemize}
\item if $Q_{2,3\bar{4}}$ is empty (hence $Q_{4,\bar{3}}^2$ is also empty), node 2 transmits the packet in $B_{2,\bar{3}4}^{1}$
until it is received by node 3, at which time the packet is removed
from $B_{2,\bar{3}4}^{1}$ and $B_{4,\bar{3}}^{1}.$
\item if $Q_{2,3\bar{4}}$ is nonempty (hence $Q_{3,\bar{4}}^{2}$ is nonempty
as well), then the opportunity for network coding arises. Indeed,
let $q_{1}$ and $q_{2}$ be the packets stored in $B_{2,\bar{3}4}^{1}$
and $Q_{2,3\bar{4}}$ respectively. Packet $q_{1}$ is session (1,3)
packet, unknown to node 3 and received by node 4 (it is the packet
stored in $B_{4,\bar{3}}^{1}$). Packet $q_{2}$ is a session (2,4)
packet unknown to node 4 and received by node 3 (it is the packet
stored in $Q_{3,\bar{4}}^{2})$. Hence node 2 sends packet $q=q_{1}\oplus q_{2}$,
where $\oplus$ denotes XOR operation, and if any node in $\left\{ 2,4\right\} $
receives $q,$ that node can decode the packet destined to it. For
example, if node 3 receives packet $p,$ then $q_{1}=p\oplus q_{2}.$
\end{itemize}
\end{itemize}
\end{itemize}
The detailed description of the algorithm, Algorithm 4, is given in Appendix \ref{sec:BNC}. Algorithm
4 is admissible. In fact, the order and service times
of session (1,3) packets are exactly the same as in Algorithm \ref{alg:BA}.
The only difference is that at certain times during which $Q_{1}$
is nonempty, specifically at step \ref{enu:net-code} of the algorithm,
some of these packets are network-coded with packets of session (2,4).
This network coding operation does not alter the time the packet is
delivered to node 3, but allows the increase of throughput for packets
of session (2,4) by allowing for the possibility of simultaneous reception
of packets by nodes 3, 4, using a single transmission by node 2.

\subsection{Performance Analysis of Network Coding Algorithm}

In this section we calculate the throughput region of Algorithm 4.
We first provide an outline of the analysis. For a session
(1,3) packet $q$, let $t_{s}^{q}$ and $t_{r}^{q}$ be respectively
the time when node 1 starts transmitting the packet and the time node
3 receives it - note that according to the algorithm the packet may
have been transmitted to node 3 by node 2. The ``service time''
of the packet is then $t_{r}^{q}-t_{s}^{q}$. Due to the operation
of Algorithm 4 and the statistical assumptions, all session
(1,3) packets have the same distribution of service time. We denote
by $\bar{S}_{1}^{(4)}$ the expected value of the service
time of a session (1,3) packet and provide a method for calculating
it. The queue $Q_{1}^{S}$, discussed in Section \ref{subsec:Prelim},
consisting of all session (1,3) packets that are in the system (at
node 1 and/or node 2), may be viewed as a discrete time queue with
average packet service time $\bar{S}_{1}^{(4)}$. Based
on this observation the maximum packet arrival rate $\lambda_{1}$
for which queue $Q_{1}^{S}$ is stable is given by,
\[
0\leq\lambda_{1}=r_{1}<\frac{1}{\bar{S}_{1}^{4}}\triangleq\mu_{1}^{(4)}.
\]
Next, given $\lambda_{1}$, we calculate the throughput for session
(2,4) packets. For this, we observe that queue $Q_{2,3\bar{4}}$ is
of the ``generic type'' discussed at the end of Section \ref{subsec:Prelim},
where $T_{k}$ is the time when the $k$th busy period of queue $Q_{1}^{S}$
starts. Hence the throughput of packets entering this queue and delivered
to node 4 can be determined through (\ref{eq:stab2})-(\ref{eq:stab3})
after calculating the parameters involved in these formula. The throughput
of session (2,4) packets is then the sum of the throughput of packets
entering $Q_{2,3\bar{4}}$ and the throughput of packets delivered
by node 2 directly to node 4 during the times when queue $Q_{1}^{S}$
is empty.

We now proceed with the detailed analysis. Since as mentioned
in Section \ref{subsec:Basic-Network-Coding} the service times of
packets under Algorithm 4 are the same as those induced
by Algorithm \ref{alg:BA}, we immediately conclude from (\ref{eq:stabRate3})
that
\begin{equation}
\mu_{1}^{\left(4\right)}=\mu_{1}^{\left(\ref{alg:BA}\right)}=\frac{\left(1-\epsilon_{3}^{2}\right)\left(1-\epsilon_{23}^{1}\right)}{1-\epsilon_{3}^{2}+\epsilon_{3}^{1}-\epsilon_{23}^{1}}.\label{eq:rate1NetCoding}
\end{equation}
For the purposes of calculating the appropriate parameters of $Q_{2,3\bar{4}}$
needed in formulas (\ref{eq:stab2})-(\ref{eq:stab3}) we
need to examine the service times of session (1,3) packets under Algorithm
4 in more detail. From its operation it can be seen that
the service of a packet under Algorithm 4 has the same
distribution as the length of time needed for successive returns to
state $1$ of the Markov Chain described in Figure \ref{Fig:Markov_chain3}.
%For clarity Table \ref{tab:1} shows the transition probabilities in %tabular form.
To see this assume that node 1 begins transmission of a new packet
from $Q_{1},$ hence the Markov Chain is in state $1$. At this state:
\begin{itemize}
\item If the packet (sent from $Q_1$) is erased at node 3, and received by nodes 2, 4, an event
with probability $\epsilon_{3}^{1}-\epsilon_{23}^{1}-\epsilon_{34}^{1}+\epsilon_{234}^{1}$, then the packet is stored in buffers $B_{2,\bar{3}4}^{1}$ and $B_{4,\bar{3}}^1 $ and node
2 begins transmission of the packet in $B_{2,\bar{3}4}^{1}$ in the next slot (note that if queue $Q_{2,3\bar{4}}$ is nonempty, the packet in $B_{2,\bar{3}4}^{1}$ is transmitted network-coded with the head of line of packet of $Q_{2,3\bar{4}}$) , i.e., the chain moves to state $3$. At this state:
\begin{itemize}
\item If upon transmission by node 2 the packet is received by node 3, an
event of probability $1-\epsilon_{3}^{2}$, the service time of the
packet completes and we return to state $1.$
\end{itemize}
\item If the packet (sent from $Q_1$) is erased at nodes 3, 4 and received by node 2, an event
with probability $\epsilon_{34}^{1}-\epsilon_{234}^{1}$, then the
packet is stored in buffer $B_{2,\bar{3}\bar{4}}^{1}$ and node 2
begins transmission of the packet in $B_{2,\bar{3}\bar{4}}^{1}$ in the next slot, i.e., the chain moves to state
$2$.
\end{itemize}
Proceeding in a similar fashion we evaluate all the transition probabilities
of the Markov Chain.

\begin{figure}[t]
\label{tab:1}
\centering\includegraphics[scale=0.63]{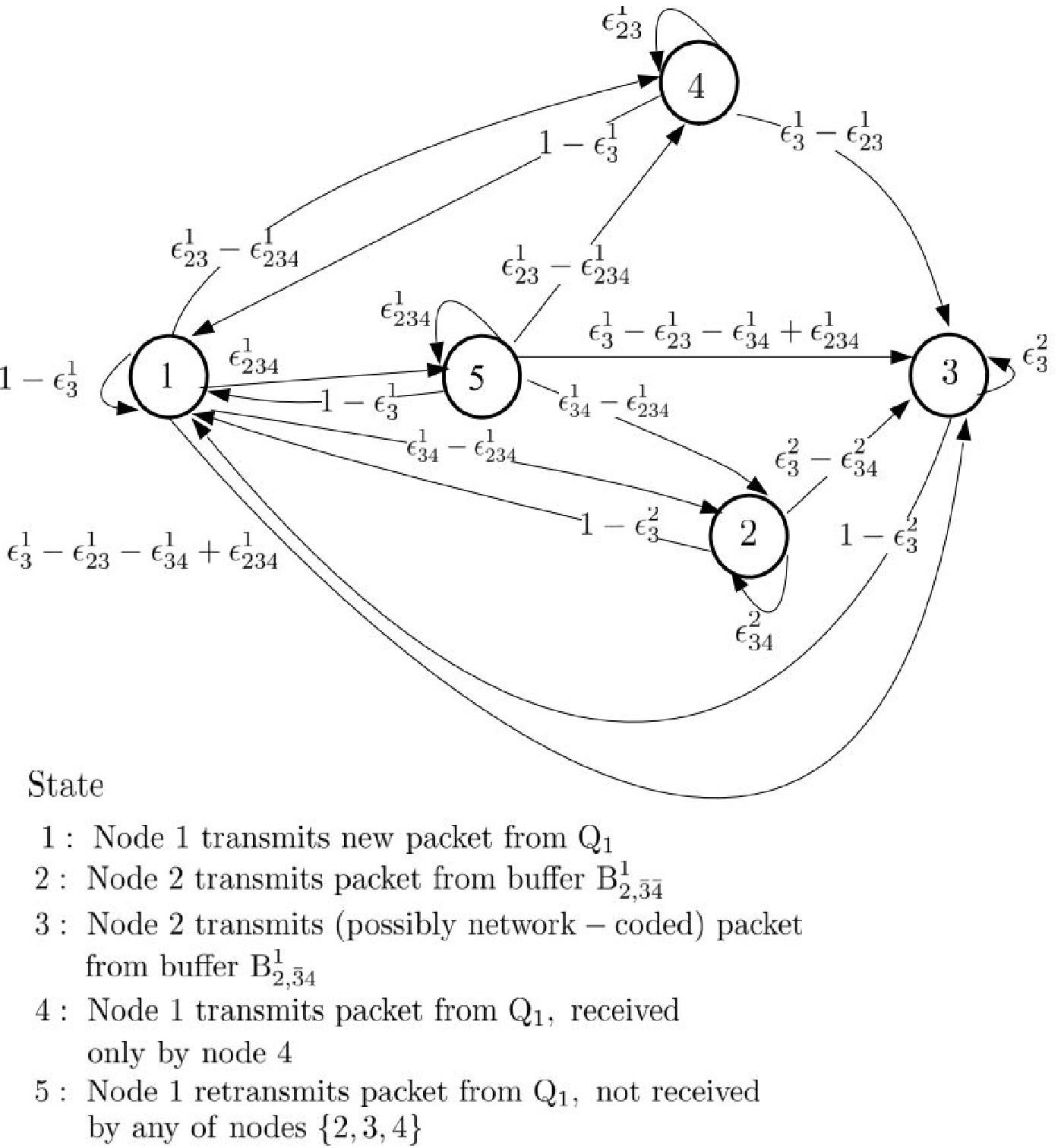}
\protect\caption{Markov chain describing service times of Algorithm 4 }

\label{Fig:Markov_chain3}
\end{figure}

Let $\pi_{k}$ be the steady-state probability that the Markov Chain represented in Figure
\ref{Fig:Markov_chain3} is in state $k\in\left\{ 1,2,3,4\right\} $.
Let $V_{3}$ the number of visits to state $3$ between two successive
visits to state $1$. It is known \cite[page 161 ]{Wolff89} that
the following equalities hold.
\begin{equation}
\mu_{1}^{(4)}=\pi_{1},\label{eq:RateEqSS}
\end{equation}
\begin{equation}
\label{eq:returns}
\mathbb{E}\left[V_{3}\right]=\frac{\pi_{3}}{\pi_{1}}.
\end{equation}

From (\ref{eq:BusyPeriod}), the average length of the busy and idle
period of queue $Q_{1}^{S}$ are given respectively by
\begin{eqnarray*}
B_{1} & = & \frac{\lambda_{1}/\pi_{1}}{\left(1-\lambda_{1}/\pi_{1}\right)(1-q_{0})},\\
I_{1} & = & \frac{1}{(1-q_{0})}.
\end{eqnarray*}

We now concentrate on queue $Q_{2,3\bar{4}}$. This queue is of the generic type discussed at the end of Section \ref{subsec:Prelim}. Specifically, we identify $T_k$ with the beginning of the busy period of queue $Q_{1}^{S}$. Packets arrive to
queue  $Q_{2,3\bar{4}}$ during the idle periods of $Q_{1}^{S}$ when node 2 transmits
a session (2,4) packet that is erased at node 4 and received at node
3. We identify the number of these packets with the parameter $A_{0}$
of the generic queue discussed in Section \ref{subsec:Prelim}, hence,
\begin{equation}
\mathbb{E}\left[A_{0}\right]=I_{1}\left(\epsilon_{4}^{2}-\epsilon_{34}^{2}\right).\label{eq:ArQ2}
\end{equation}
Opportunities to transmit packets from $Q_{2,3\bar{4}}$ arise whenever
buffer $B_{2,\bar{3}4}^{1}$ is nonempty, i.e., the Markov chain in
Figure \ref{Fig:Markov_chain3} is in state 3. Let $V_{3,k}$ be the
number of times state 3 is visited during the service time, $S_{k}$,
of the $k$th packet in a busy period of $Q_{1}^{S}$. The random variables $V_{3,k},\ k=1,2,..$ are i.i.d. and from the definition of the Markov Chain in Figure \ref{Fig:Markov_chain3} it follows that their mean is
\begin{equation}
\label{eq:avreturn}
  \mathbb{E}\left[V_{3,k}\right]=\mathbb{E}\left[V_{k}\right]=\frac{\pi_{3}}{\pi_{1}}.
\end{equation}
Let $N_{B}$
be the number of session (1,3) packets served during a busy period of $Q_{1}^{S}$.
It is known \cite{Wolff89} that
\begin{equation}
\label{eq:number}
B_{1}=\bar{S}_{1}^{4}\mathbb{E}\left[N_{B}\right]=\frac{\mathbb{E}\left[N_{B}\right]}{\pi_{1}}.
\end{equation}

The number of slots available for transmission of session (2,4) packets
during a busy period is
\[
H_{0}=\sum_{k=1}^{N_{B}}V_{3,k}.
\]
Using the fact that $N_{B}$ is a stopping time we obtain from Wald's
equality \cite{Wolff89}, (\ref{eq:avreturn}) and (\ref{eq:number}),
\begin{eqnarray}
\mathbb{E}\left[H_{0}\right] & = & \mathbb{E}\left[V_{3,1}\right]\mathbb{E}\left[N_{B}\right]\nonumber \\
 & = & \pi_{3}B_{1}.\label{eq:AvQ2}
\end{eqnarray}
The service time of a session (2,4) packet transmitted whenever buffer
$B_{2,\bar{3}4}^{1}$ is nonempty, is geometrically distributed with
parameter $1-\epsilon_{4}^{2}$, hence
\begin{equation}
\mathbb{E}\left[S_{1}\right]=\frac{1}{1-\epsilon_{4}^{2}}.\label{eq:SerQ2}
\end{equation}
Also, since $T_{k+1}-T_{k}$ is the sum of the lengths of the $k$th
busy and $k$th idle period of queue $Q_1^S$, we have
\begin{eqnarray}
\mathbb{E}\left[G_{0}\right] & = & B_{1}+I_{1}\nonumber \\
 & = & \frac{\lambda_{1}/\pi_{1}}{\left(1-\lambda_{1}/\pi_{1}\right)(1-q_{0})}+\frac{1}{(1-q_{0})}\nonumber \\
 & = & \frac{1}{\left(1-\lambda_{1}/\pi_{1}\right)(1-q_{0})}.\label{eq:LenQ2}
\end{eqnarray}

Using (\ref{eq:ArQ2}), (\ref{eq:AvQ2}), (\ref{eq:SerQ2}) and (\ref{eq:LenQ2})
above in formulas (\ref{eq:stab2})-(\ref{eq:stab3}) for the generic
queue we obtain after some algebra the following formula for the throughput
of packets in queue $Q_{2,3\bar{4}},$ $r_{q}$.

\begin{equation}
{\rm If}\ \epsilon_{4}^{2}-\epsilon_{34}^{2}\leq\frac{\frac{\pi_{3}\lambda_{1}}{\pi_{1}}(1-\epsilon_{4}^{2})}{1-\frac{\lambda_{1}}{\pi_{1}}},\ r_{q}=\left(\epsilon_{4}^{2}-\epsilon_{34}^{2}\right)\left(1-\frac{\lambda_{1}}{\pi_{1}}\right),\label{eq:stab2-1}
\end{equation}
\begin{equation}
{\rm If\ }\epsilon_{4}^{2}-\epsilon_{34}^{2}>\frac{\frac{\pi_{3}\lambda_{1}}{\pi_{1}}(1-\epsilon_{4}^{2})}{1-\frac{\lambda_{1}}{\pi_{1}}},{\rm }r_{q}=\frac{\pi_{3}}{\pi_{1}}\left(1-\epsilon_{4}^{2}\right)\lambda_{1},\label{eq:stab3-1}
\end{equation}
where the steady state steady state probabilities $\pi_{1},\ \pi_{3}$
can be calculated using the transition probabilities of the Markov
Chain in Figure \ref{Fig:Markov_chain3}. In fact, from (\ref{eq:rate1NetCoding}),
(\ref{eq:RateEqSS}) immediately have,
\begin{equation}
\pi_{1}=\frac{\left(1-\epsilon_{3}^{2}\right)\left(1-\epsilon_{23}^{1}\right)}{1-\epsilon_{3}^{2}+\epsilon_{3}^{1}-\epsilon_{23}^{1}},\label{eq:prob_state_3-0}
\end{equation}
while calculation using the transition probabilities shows that
\begin{equation}
\pi_{3}=1-\frac{\left(\frac{1-\epsilon_{234}^{1}}{1-\epsilon_{23}^{1}}+\frac{\epsilon_{34}^{1}-\epsilon_{234}^{1}}{1-\epsilon_{34}^{2}}\right)}{\left(1-\epsilon_{234}^{1}\right)}\frac{\left(1-\epsilon_{23}^{1}\right)\left(1-\epsilon_{3}^{2}\right)}{\left(1+\epsilon_{3}^{1}-\epsilon_{3}^{2}-\epsilon_{23}^{1}\right)}.\label{eq:prob_state_3-1}
\end{equation}

The throughput of session (2,4) packets transmitted during an idle
period of queue $Q_{1}^{S}$, $r_{d},$ is easily calculated as
\begin{eqnarray}
r_{d} & = & \frac{I_{1}\left(1-\epsilon_{4}^{2}\right)}{B_{1}+I_{1}}\nonumber \\
 & = & \left(1-\lambda_{1}/\pi_{1}\right)\left(1-\epsilon_{4}^{2}\right).\label{eq:rd}
\end{eqnarray}
Since the throughput of session (2,4) packets is $r_{2}=r_{q}+r_{d},$
we conclude from (\ref{eq:stab2-1}), (\ref{eq:stab3-1}), (\ref{eq:prob_state_3-0}),
(\ref{eq:prob_state_3-1}) and (\ref{eq:rd}) the following proposition.

\begin{proposition} \label{prop:RRBAlg} The throughput region of
Algorithm 4  is the set of throughput pairs $\left(r_{1},r_{2}\right)$
satisfying the following inequalities
\begin{eqnarray*}
\frac{1-\epsilon_{3}^{2}+\epsilon_{3}^{1}-\epsilon_{23}^{1}}{\left(1-\epsilon_{3}^{2}\right)\left(1-\epsilon_{23}^{1}\right)}r_{1}+\frac{r_{2}}{1-\epsilon_{34}^{2}} & \leq & 1,\\
\left(\frac{\epsilon_{34}^{1}-\epsilon_{234}^{1}}{\left(1-\epsilon_{34}^{2}\right)\left(1-\epsilon_{234}^{1}\right)}+\frac{1}{1-\epsilon_{23}^{1}}\right)r_{1}+\frac{r_{2}}{1-\epsilon_{4}^{2}} & \leq & 1,
\end{eqnarray*}
\[
r_{i}\geq0,\ i\in\left\{1,2\right\}.
\]
\end{proposition}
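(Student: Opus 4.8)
The plan is to compute, for each fixed primary rate $r_1=\lambda_1\in[0,\pi_1)$, the maximum achievable secondary rate $r_2^{\max}(r_1)=r_q+r_d$, and then to show that the resulting set of pairs $(r_1,r_2)$ — after taking closure — is exactly the intersection of the two half-planes in the statement. The first observation is that the region is \emph{downward closed} in $r_2$: as already noted for Algorithm \ref{alg:NC-Alg}, node 2 may always refrain from transmitting in selected slots, so every $r_2\in[0,r_2^{\max}(r_1)]$ is achievable. Thus it suffices to characterize $r_2^{\max}(r_1)$ and take the closure of $\{(r_1,r_2):0\le r_2\le r_2^{\max}(r_1),\ 0\le r_1<\pi_1\}$.

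Next I would substitute $r_d$ from (\ref{eq:rd}) into the two case expressions for $r_q$. In the first case, (\ref{eq:stab2-1}), the common factor $(1-\lambda_1/\pi_1)$ pulls out and the $\epsilon_4^2$ terms cancel, yielding $r_2=(1-\lambda_1/\pi_1)(1-\epsilon_{34}^2)$; dividing by $1-\epsilon_{34}^2$ and inserting $1/\pi_1$ from (\ref{eq:prob_state_3-0}) produces the first boundary line with equality. In the second case, (\ref{eq:stab3-1}), the sum collapses to $r_2=(1-\epsilon_4^2)\bigl(1-\tfrac{\lambda_1}{\pi_1}(1-\pi_3)\bigr)$; here the only real work is simplifying $(1-\pi_3)/\pi_1$ using (\ref{eq:prob_state_3-0}) and (\ref{eq:prob_state_3-1}), which reduces to $\tfrac{1}{1-\epsilon_{23}^1}+\tfrac{\epsilon_{34}^1-\epsilon_{234}^1}{(1-\epsilon_{34}^2)(1-\epsilon_{234}^1)}$ after the factor $(1-\epsilon_3^2)(1-\epsilon_{23}^1)/(1+\epsilon_3^1-\epsilon_3^2-\epsilon_{23}^1)$ cancels against $\pi_1$; this gives the second boundary line with equality.

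The conceptual crux — and the step I expect to take the most care — is the \emph{gluing} argument that turns this piecewise-linear boundary into a clean intersection of half-planes. Denoting by $L_1(r_1)$ and $L_2(r_1)$ the right-hand sides obtained above, I would verify by direct manipulation that the case threshold separating (\ref{eq:stab2-1}) from (\ref{eq:stab3-1}) is \emph{algebraically equivalent} to the inequality $L_1(r_1)\le L_2(r_1)$: subtracting and factoring out $(1-\lambda_1/\pi_1)$ turns $L_1\le L_2$ precisely into $\epsilon_4^2-\epsilon_{34}^2\le \tfrac{(\pi_3\lambda_1/\pi_1)(1-\epsilon_4^2)}{1-\lambda_1/\pi_1}$, which is the defining condition of case one. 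Consequently $r_2^{\max}(r_1)=\min\{L_1(r_1),L_2(r_1)\}$ for every $r_1\in[0,\pi_1)$, the two linear pieces meeting exactly at the case boundary.

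Finally, since $L_1$ and $L_2$ are affine in $r_1$, the set $\{0\le r_2\le\min(L_1,L_2)\}$ coincides with $\{r_2\le L_1\}\cap\{r_2\le L_2\}$, i.e., the two stated inequalities, intersected with $r_i\ge 0$. Taking the closure merely adjoins the limiting $r_1=\pi_1$ face, where $L_1=0$ forces $r_2=0$, so no achievable points are lost. This yields exactly the region claimed in the proposition. The only genuinely error-prone part is the rational-function bookkeeping in simplifying $(1-\pi_3)/\pi_1$ and in checking the threshold equivalence; everything else is structural.
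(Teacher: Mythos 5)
Your proposal is correct and follows essentially the same route as the paper: the paper's own ``proof'' is precisely the assembly $r_2=r_q+r_d$ from (\ref{eq:stab2-1}), (\ref{eq:stab3-1}), (\ref{eq:prob_state_3-0}), (\ref{eq:prob_state_3-1}) and (\ref{eq:rd}), and your explicit check that the case threshold is algebraically equivalent to $L_1(r_1)\leq L_2(r_1)$ (so that $r_2^{\max}=\min\{L_1,L_2\}$ and the region is the stated intersection of half-planes) is exactly the step the paper leaves implicit. No gaps.
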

\begin{figure}
\includegraphics[scale=0.40]{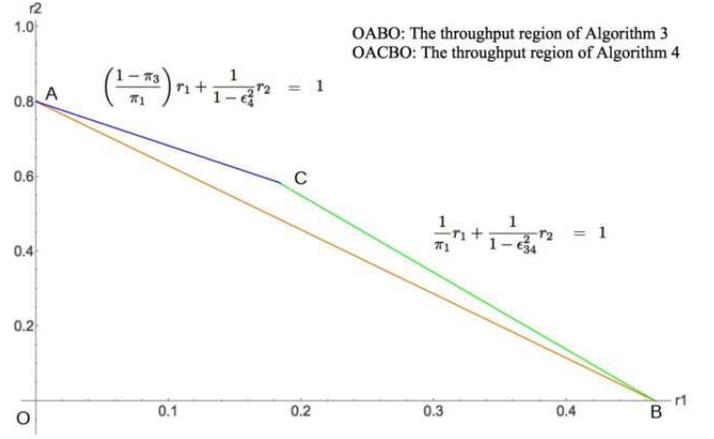}

\caption{\label{fig:The-region-under3,4}The throughput regions of Algorithms
\ref{alg:BA} and 4}
\end{figure}

In Figure \ref{fig:The-region-under3,4} we show the throughput region of Algorithms
\ref{alg:BA} and 4 using the same erasure probability parameters as in Figure \ref{fig:The-region-under1,3}. We see that when node 3 performs network coding, for the same arrival rate of session (1,3) packets, the throughput of Secondary session (2, 4) is increasing, adding in affect the area ABC to the throughput region of the system. We note that this is achieved without adding any additional complexity to the Primary transmitter. The Primary receiver has the additional complexity of storing received session (2,4) packets and performing simple decoding of network-coded packets; this seems an acceptable trade-off for the primary session (1,3), since as is seen in Figure \ref{fig:The-region-under1,3}, cooperation with the secondary session increases significantly the stability region of session (1,3).

\section{An Algorithm with Increased Throughput Region}

\label{sec:Improved}

In this section we examine whether the throughput region of the system
can be increased further by employing more sophisticated network coding
operations. The rationale is the following.

Consider the case where Primary transmitter (node 1) sends a session
(1,3) packet $p_{1}$, and assume that this packet is received only
by Secondary transmitter (node 2). According to Algorithms \ref{alg:BA}
and 4, node 2 will then act as relay for packet $p_{1}$.
Note that for a given $r_{1}$, the increase in $r_{2}$ induced by
Algorithm 4 as compared to $r_{2}$ induced by Algorithm
\ref{alg:BA}, occurs because, during the attempt by node 2 to
send packet $p_{1}$, it happens that this packet has already been
received by node $4$; so the possibility of network coding operation
arises. However, if $\epsilon_{34}^{1}<\epsilon_{34}^{2}$, then it
is more likely that packet $p_{1}$ is received by either node 3 or
node 4 if it is re-transmitted by node 1. On the other hand, if the rate of
packets to node 1, $r_{1}=\lambda_{1}$, is close to point B in Figure
\ref{fig:The-region-under1,3} this re-transmission should be avoided
since queue $Q_{1}$ will become unstable. Therefore, it seems that,
in order to effect increase in the throughput of session (2,4) while
maintaining admissibility of the algorithm, a compromise between the following
two cases must be made: a) node 2 acts immediately as a relay of packet $p_{1}$
and b) node 1 keeps re-transmitting $p_{1}$ until received by either
node 3 or node 4.

To effect this compromise, we modify Algorithm 4 as follows.
We introduce a parameter $q,\ 0\leq q\leq1$. When node 1 transmits
a packet $p_{1}$ that is seen only by node 2 (hence now the packet
is stored in buffer $B_{2,\bar{3}\bar{4}}^{1}$ ) then $p_{1}$ remains
in $Q_{1}$ and is transmitted by node 1 with probability $q$ and
by node 2 with probability $1-q.$ In both cases, if $p_{1}$ is received
by node 3, then it is removed from $Q_{1}$ and $B_{2,\bar{3}\bar{4}}^{1}$;
if on the other hand it is erased at node 3 but received by node 4,
then the packet is removed from $Q_{1}$ and node 2 acts as relay
for $p_{1}$ as in Algorithm 4.

The detailed description of the algorithm, Algorithm 5, is given in Appendix \ref{sec:ImprovedApp}.
This algorithm differs from Algorithm 4 in four places
(these places are written in italics in the algorithm).
\begin{enumerate}
\item In Item \ref{it:case1}, if the packet transmitted by node 1 is received
only by node 2 and node 4 has not seen the packet earlier, then the
packet is placed in buffer $B_{2,\bar{3}\bar{4}}^{1}$ but it also
remains at the head of line of $Q_{1}$.
\item In Item \ref{it:case2}, instead of node 2 transmitting the packet
in $B_{2,\bar{3}\bar{4}}^{1}$ as in Algorithm 4, the
packet is transmitted either by node 1 or by node 2 with probabilities
$q$ and $1-q$ respectively.
\item In Item \ref{it:case3}, the packet that is received by node 3 is
removed from buffer $B_{\bar{3}\bar{4}}^{1}$ and also from queue
$Q_{1}$ (since it has not been removed in Item \item In Item \ref{it:case4}, the packet is removed from both buffer $B_{2,\bar{3}\bar{4}}^{1}$
and queue $Q_{1}$ (since it has not been removed in Item \ref{it:case1}),
and from now on node 2 will act as relay for the packet.
\end{enumerate}

The service times of packets under algorithm Algorithm 5
may increase as compared to the service times of the packets under
Algorithm 4, but it can be shown by employing arguments
similar to those used in the proof of Proposition \ref{prop:StochDominance}
that they remain stochastically smaller than the service times of
Algorithm \ref{alg:NC-Alg} that involved no cooperation.

\begin{figure}
\centering\includegraphics[scale=0.62]{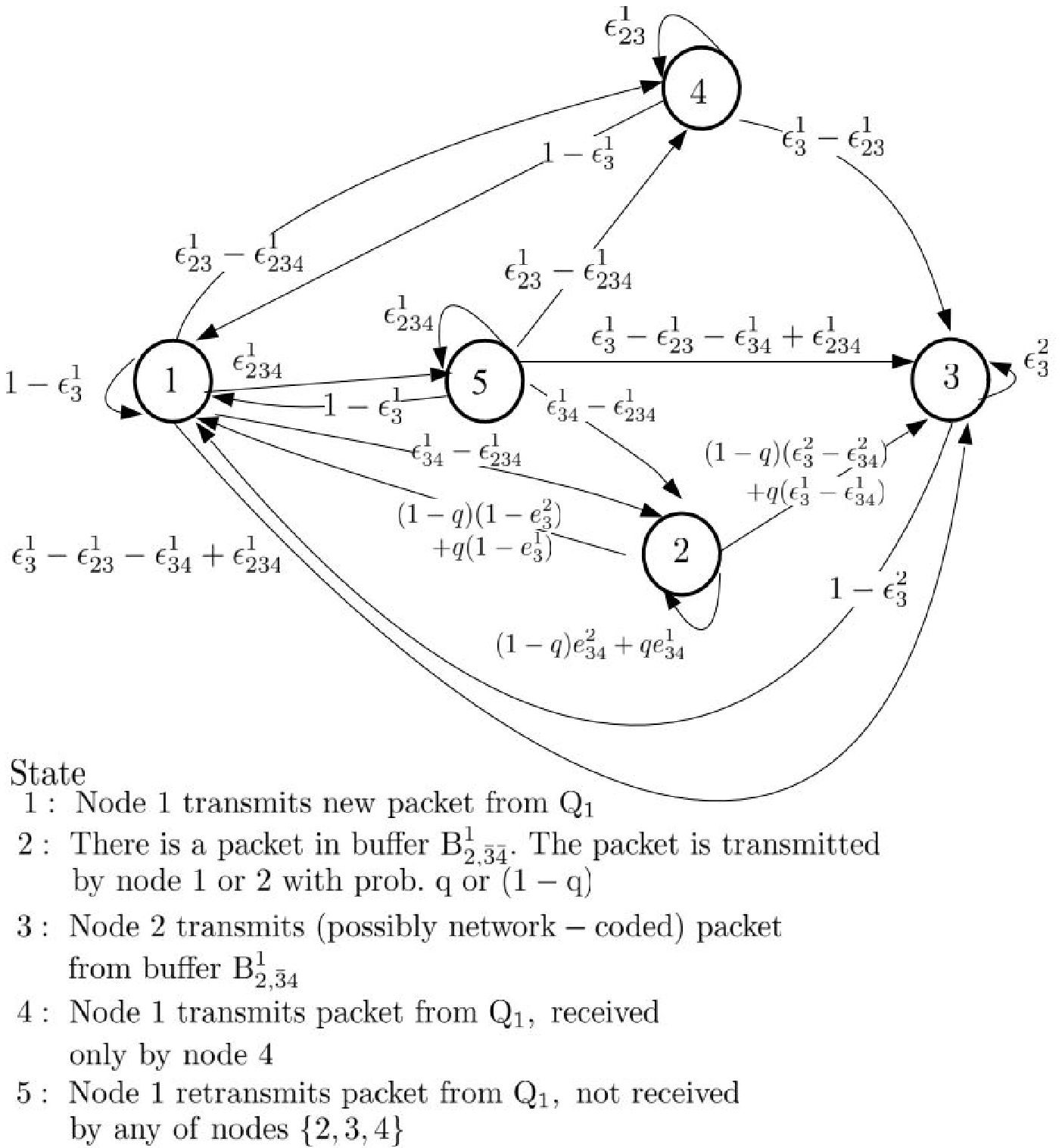}
\caption{The Markov chain describing service times of Algorithm 5 }
\label{fig:ImprovedMC}
\end{figure}

For given $q$, the performance analysis of Algorithm 5
is similar to the performance Analysis of Algorithm 4.
The main difference is that the Markov Chain describing the service
times of packets under Algorithm 5 is now the one
presented in Figure \ref{fig:ImprovedMC}.
Let $\pi_{i}(q),1\leq i\leq5,$  be the steady state probabilities of this
Markov chain when parameter $q$ is used. We calculate below two formulas that are needed to determine the throughput region of the system.

\begin{equation}
\frac{1}{\pi_{1}(q)}=\frac{1+\epsilon_{3}^{1}-\epsilon_{3}^{2}-\epsilon_{23}^{1}}{\left(1-\epsilon_{3}^{2}\right)\left(1-\epsilon_{23}^{1}\right)}+C_{1}(q)\label{eq:1overp1}
\end{equation}
\begin{equation}
\frac{1-\pi_{3}(q)}{\pi_{1}(q)}=\frac{\epsilon_{34}^{1}-\epsilon_{234}^{1}}{\left(1-\epsilon_{34}^{2}\right)\left(1-\epsilon_{234}^{1}\right)}+\frac{1}{1-\epsilon_{23}^{1}}-C_{2}(q),\label{eq:1-p3}
\end{equation}
where
\[
C_{1}(q)=\frac{\left(\epsilon_{3}^{1}-\epsilon_{3}^{2}\right)\left(\epsilon_{34}^{1}-\epsilon_{234}^{1}\right)}{\left(1-\epsilon_{234}^{1}\right)\left(1-\epsilon_{3}^{2}\right)}\frac{q}{1-q\epsilon_{34}^{1}-\left(1-q\right)\epsilon_{34}^{2}},
\]
\[
C_{2}(q)=\frac{\left(\epsilon_{34}^{1}-\epsilon_{234}^{1}\right)\left(\epsilon_{34}^{2}-\epsilon_{34}^{1}\right)}{\left(1-\epsilon_{234}^{1}\right)\left(1-\epsilon_{34}^{2}\right)}\frac{q}{1-q\epsilon_{34}^{1}-\left(1-q\right)\epsilon_{34}^{2}}.
\]

We can now state the following
proposition which is analogous to Proposition \ref{prop:RRBAlg}.

%Using the Markov chain in Figure %\ref{fig:ImprovedMC}, and denoting
%by $\pi_{i}(q),1\leq i\leq5$ the %steady state probabilities of the
%Markov chain when parameter $q$ is %used, we derive the following
%proposition which is analogous to %Proposition \ref{prop:RRBAlg}.

\begin{proposition} \label{prop:ImprovedProp} The
throughput region of Algorithm 5 is the set
of pairs $\left(r_{1},r_{2}\right)$ satisfying the following
inequalities

\begin{align*}
\left(\frac{1+\epsilon_{3}^{1}-\epsilon_{3}^{2}-\epsilon_{23}^{1}}{\left(1-\epsilon_{3}^{2}\right)\left(1-\epsilon_{23}^{1}\right)}+C_{1}(q)\right)r_{1}+\frac{r_{2}}{1-\epsilon_{34}^{2}} & \leq1,\\
\left(\frac{\epsilon_{34}^{1}-\epsilon_{234}^{1}}{\left(1-\epsilon_{34}^{2}\right)\left(1-\epsilon_{234}^{1}\right)}+\frac{1}{1-\epsilon_{23}^{1}}-C_{2}(q)\right)r_{1}+\frac{r_{2}}{1-\epsilon_{4}^{2}} & \leq1,
\end{align*}
\[
0\leq q\leq1,\ r_{i}\geq0,\ i=1,2.
\]
\end{proposition}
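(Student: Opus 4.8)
The plan is to mirror, step for step, the derivation of Proposition \ref{prop:RRBAlg}, replacing the four-state chain of Figure \ref{Fig:Markov_chain3} by the five-state chain of Figure \ref{fig:ImprovedMC}. The starting point is the observation, already noted in the text, that the service time of a session (1,3) packet under Algorithm 5 is distributed as the first-return time to state $1$ of this chain; hence $Q_1^S$ is again a discrete-time queue, and by the analogue of (\ref{eq:RateEqSS}) its service rate is $\mu_1^{(5)}=\pi_1(q)$. The stability condition (\ref{eq:stabcond}) then becomes $0\le r_1=\lambda_1<\pi_1(q)$, and (\ref{eq:BusyPeriod})--(\ref{eq:IdlePeriod}) supply the busy- and idle-period lengths $B_1,I_1$ of $Q_1^S$ with $\pi_1(q)$ in place of $\pi_1$.

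Next I would view $Q_{2,3\bar 4}$ as the generic queue of Section \ref{subsec:Prelim}, with $T_k$ the start of the $k$th busy period of $Q_1^S$, and re-identify its four parameters. Because packets still enter $Q_{2,3\bar 4}$ only during idle periods of $Q_1^S$, and are still transmitted (network-coded) only while $B_{2,\bar 3 4}^1$ is occupied, i.e. while the chain sits in state $3$, the arguments giving (\ref{eq:ArQ2}), (\ref{eq:AvQ2}), (\ref{eq:SerQ2}) and (\ref{eq:LenQ2}) go through unchanged: $\mathbb{E}[A_0]=I_1(\epsilon_4^2-\epsilon_{34}^2)$, $\mathbb{E}[S_1]=1/(1-\epsilon_4^2)$, $\mathbb{E}[G_0]=B_1+I_1$, and, applying Wald's equality to the i.i.d. per-packet counts of visits to state $3$ (whose mean is now $\pi_3(q)/\pi_1(q)$), $\mathbb{E}[H_0]=\pi_3(q)B_1$. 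Feeding these into the generic-queue formulas (\ref{eq:stab2})--(\ref{eq:stab3}) reproduces the two-case throughput $r_q$ of $Q_{2,3\bar 4}$, namely the analogues of (\ref{eq:stab2-1})--(\ref{eq:stab3-1}) with $\pi_1(q),\pi_3(q)$ in place of $\pi_1,\pi_3$, while the throughput during idle periods is $r_d=(1-\lambda_1/\pi_1(q))(1-\epsilon_4^2)$ as in (\ref{eq:rd}).

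Finally I would form $r_2=r_q+r_d$ and simplify in each case. In the first case the sum collapses to $r_2/(1-\epsilon_{34}^2)=1-r_1/\pi_1(q)$; substituting (\ref{eq:1overp1}) for $1/\pi_1(q)$ yields the first stated inequality, with coefficient $\frac{1+\epsilon_3^1-\epsilon_3^2-\epsilon_{23}^1}{(1-\epsilon_3^2)(1-\epsilon_{23}^1)}+C_1(q)$. In the second case it collapses to $r_2/(1-\epsilon_4^2)=1-r_1(1-\pi_3(q))/\pi_1(q)$; substituting (\ref{eq:1-p3}) yields the second inequality. Since the two cases partition the range of admissible $r_1$ and the binding bound switches between them, the achievable set for a fixed $q$ is precisely the intersection of the two half-planes with $r_i\ge 0$; letting $q$ vary over $[0,1]$ and taking the union gives the region claimed in the proposition.

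The genuinely new content beyond the Algorithm 4 analysis is the computation of the steady-state probabilities of the enlarged chain, i.e. the verification of (\ref{eq:1overp1})--(\ref{eq:1-p3}) together with the exact form of the corrections $C_1(q),C_2(q)$ that encode the randomized split, with probabilities $q$ and $1-q$, of the retransmission of a packet held in $B_{2,\bar 3\bar 4}^1$ between nodes $1$ and $2$; these I may take as given from the excerpt. The step I expect to require the most care is the structural justification that the added state (modeling retransmission by node $1$) creates no service opportunities for $Q_{2,3\bar 4}$ outside state $3$, so that the identity $\mathbb{E}[H_0]=\pi_3(q)B_1$ and the renewal/Wald bookkeeping remain exact; once this is secured, the rest is a direct replay of the computation behind Proposition \ref{prop:RRBAlg}.
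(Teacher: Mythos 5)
Your proposal is correct and follows essentially the same route as the paper: the paper likewise asserts that the analysis of Algorithm 5 parallels that of Algorithm 4, replaces the four-state chain by the five-state chain of Figure \ref{fig:ImprovedMC}, states the formulas (\ref{eq:1overp1})--(\ref{eq:1-p3}) for $1/\pi_1(q)$ and $(1-\pi_3(q))/\pi_1(q)$ without detailed derivation, and obtains the two half-plane constraints by substituting these into the generic-queue throughput expressions exactly as you do. Your remark that the only point needing care is that service opportunities for $Q_{2,3\bar 4}$ still occur only in state $3$ is well placed, but the paper treats this as immediate from the algorithm's description, so no substantive difference remains.
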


The throughput region of Algorithm 5 can be %obtained
as follows. For given arrival rate
$r_{1}=\lambda_{1}$ where $0\leq r_{1}\leq\pi_{1}$,
we calculate the parameter $q$ that maximizes the throughput $r_{2}\geq0$
under the constraints described in Proposition \ref{prop:ImprovedProp},
i.e.,
\begin{equation}
r_{2}=\max_{0\leq q\leq1}f(q),\label{eq:fin}
\end{equation}
where
\[
f(q)=\min\left\{ (1-\epsilon_{34}^{2})\left(1-\frac{r_{1}}{\pi_{1}(q)}\right)\right.,
\]\[
\left.(1-\epsilon_{4}^{2})\left(1-\frac{(1-\pi_{3}(q))}{\pi_{1}(q)}r_{1}\right)\right\}
\]
This determines a pair $(r_{1},r_{2})$ on the %boundary of the throughput
of Algorithm 5.

Algorithm 4 is a special case of Algorithm 5
obtained by setting $q=0$. Hence the throughput region of Algorithm
5 is at least as large as the throughput region of
Algorithm 4. The following example shows that in fact
the throughput region of the system may increase under Algorithm 5
in certain cases. Consider the following system parameters $\epsilon_{2}^{1}=.3$
,$\epsilon_{3}^{1}=.77$, $\epsilon_{4}^{1}=.6$, $\epsilon_{3}^{2}=.75$
,$\epsilon_{4}^{2}=.85$, $\epsilon_{23}^{1}=\epsilon_{2}^{1}\times\epsilon_{3}^{1}=.231$,
$\epsilon_{34}^{1}=\epsilon_{3}^{1}\times\epsilon_{4}^{1}=.462$,
$\epsilon_{34}^{2}=.75$, and $\epsilon_{234}^{1}=\epsilon_{2}^{1}\times\epsilon_{3}^{1}\times\epsilon_{4}^{1}=.1386$.
In Figure \ref{fig:ImprovedRegion} we show the throughput regions
of Algorithm 4 and 5 under these parameters.
%We see that in this case Algorithm 5 performs up
%to around 25\% better than Algorithm 4.

While as shown in the example above Algorithm 5 has the
potential of increasing the Throughput region of the system, it requires
the careful selection of parameter $q$, which depends both on the
erasure probabilities of the channel and the arrival rate of packets
to the Primary transmitter. However, if it holds $\epsilon_{34}^{1}\geq\epsilon_{34}^{2}$, additionally to the condition
$\epsilon_{3}^{1}\geq\epsilon_{3}^{2}$,
then Algorithm 4 suffices as the next proposition shows.
\begin{proposition} If in addition to the condition $\epsilon_{3}^{1}\geq\epsilon_{3}^{2}$
it holds $\epsilon_{34}^{1}\geq\epsilon_{34}^{2}$, then the throughput
regions of Algorithms 4 and 5 coincide.
\end{proposition}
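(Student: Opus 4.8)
The plan is to establish the two inclusions $\mathcal{R}_4\subseteq\mathcal{R}_5$ and $\mathcal{R}_5\subseteq\mathcal{R}_4$. The first is immediate and was already observed: setting $q=0$ gives $C_1(0)=C_2(0)=0$, so the two inequalities of Proposition \ref{prop:ImprovedProp} reduce exactly to those of Proposition \ref{prop:RRBAlg}; since $q=0$ is an admissible choice of the parameter, $\mathcal{R}_4=\mathcal{R}_5(0)\subseteq\mathcal{R}_5$, where $\mathcal{R}_5(q)$ denotes the polytope cut out by the two inequalities for a fixed $q$. For the nontrivial inclusion I would write $\mathcal{R}_5=\bigcup_{q\in[0,1]}\mathcal{R}_5(q)$, which is precisely the content of the per-$r_1$ maximization over $q$ in \eqref{eq:fin}, and then show that under the stated hypotheses $\mathcal{R}_5(q)\subseteq\mathcal{R}_5(0)=\mathcal{R}_4$ for every $q\in[0,1]$.

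Because the coefficients of $r_2$ in the two inequalities ($\tfrac{1}{1-\epsilon_{34}^2}$ and $\tfrac{1}{1-\epsilon_{4}^2}$) do not depend on $q$, it suffices to show that the two coefficients of $r_1$ are minimized at $q=0$; equivalently, that $C_1(q)\geq 0$ and $C_2(q)\leq 0$ for all $q\in[0,1]$. Both $C_1(q)$ and $C_2(q)$ share the factor $q/\bigl(1-q\epsilon_{34}^1-(1-q)\epsilon_{34}^2\bigr)$. Its denominator is $1$ minus a convex combination of $\epsilon_{34}^1$ and $\epsilon_{34}^2$, hence strictly positive since $\epsilon_{34}^1,\epsilon_{34}^2<1$, and as $q\geq 0$ the whole factor is nonnegative.

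It then remains to sign the two prefactors. For $C_1$ the prefactor is $\tfrac{(\epsilon_3^1-\epsilon_3^2)(\epsilon_{34}^1-\epsilon_{234}^1)}{(1-\epsilon_{234}^1)(1-\epsilon_3^2)}$: here $\epsilon_3^1-\epsilon_3^2\geq 0$ by the standing assumption \eqref{eq:ErasureBasicAssumption}, while $\epsilon_{34}^1-\epsilon_{234}^1\geq 0$ because the event that node $1$'s packet is erased at $2,3,4$ is contained in the event that it is erased at $3,4$; hence $C_1(q)\geq 0$. For $C_2$ the prefactor is $\tfrac{(\epsilon_{34}^1-\epsilon_{234}^1)(\epsilon_{34}^2-\epsilon_{34}^1)}{(1-\epsilon_{234}^1)(1-\epsilon_{34}^2)}$: the first factor is again nonnegative, and the additional hypothesis $\epsilon_{34}^1\geq\epsilon_{34}^2$ makes $\epsilon_{34}^2-\epsilon_{34}^1\leq 0$, so $C_2(q)\leq 0$. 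Consequently the coefficient $\tfrac{1+\epsilon_3^1-\epsilon_3^2-\epsilon_{23}^1}{(1-\epsilon_3^2)(1-\epsilon_{23}^1)}+C_1(q)$ in the first inequality and the coefficient $\tfrac{\epsilon_{34}^1-\epsilon_{234}^1}{(1-\epsilon_{34}^2)(1-\epsilon_{234}^1)}+\tfrac{1}{1-\epsilon_{23}^1}-C_2(q)$ in the second are both at least their values at $q=0$. Each constraint for general $q$ is therefore at least as tight as for $q=0$, giving $\mathcal{R}_5(q)\subseteq\mathcal{R}_5(0)$; taking the union over $q$ yields $\mathcal{R}_5\subseteq\mathcal{R}_4$, and with the forward inclusion the two regions coincide.

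The step requiring genuine care is the sign of $C_2(q)$: this is exactly where the extra hypothesis $\epsilon_{34}^1\geq\epsilon_{34}^2$ is used. Without it the factor $\epsilon_{34}^2-\epsilon_{34}^1$ can be positive, so $C_2(q)>0$, the second coefficient of $r_1$ strictly decreases in $q$, and the region can genuinely enlarge for $q>0$, which is consistent with the numerical example preceding the proposition. The only remaining technical point is the strict positivity of the shared denominator, which I noted follows from $\epsilon_{34}^1,\epsilon_{34}^2<1$.
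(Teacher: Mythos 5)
Your proof is correct and follows essentially the same route as the paper: both reduce the claim to showing that the two $r_1$-coefficients in Proposition \ref{prop:ImprovedProp} are minimized at $q=0$, i.e., $C_1(q)\geq 0$ (using $\epsilon_3^1\geq\epsilon_3^2$) and $C_2(q)\leq 0$ (using $\epsilon_{34}^1\geq\epsilon_{34}^2$), which is exactly the content of the paper's inequalities (\ref{eq:item1}) and (\ref{eq:item2}). The only cosmetic difference is that the paper first motivates (\ref{eq:item1}) by a stochastic-dominance argument before giving the derivative computation, whereas you sign $C_1(q)$ and $C_2(q)$ directly.
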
 \begin{proof} Notice first that if $\epsilon_{3}^{1}\geq\epsilon_{3}^{2}$,
then the packet service times induced by Algorithm 5
using $q=0$ - i.e., by Algorithm 4 - are stochastically
smaller that the service times of packets induced by Algorithm 5
using any $q\geq0$. This is due to the fact that under Algorithm
4, node 1 re-transmits a packet fewer times than Algorithm
5 using any $q>0$, and can be shown in a manner
similar to the proof of Proposition \ref{prop:StochDominance}. Since
the average service time of a packet is the inverse of successive
returns to state 1 of the Markov chain in Figure \ref{fig:ImprovedMC},
we conclude that
\begin{equation}
\frac{1}{\pi_{1}(0)}\leq\frac{1}{\pi_{1}(q)},\ 0\leq q\leq1.\label{eq:item1}
\end{equation}
An alternative way to establish this inequality is to take the derivative
of the function $\phi(q)=1/\pi_{1}(q)$. From (\ref{eq:1overp1}) it can be seen that
\[
\frac{d\phi}{dq}=\frac{\left(\epsilon_{3}^{1}-\epsilon_{3}^{2}\right)\left(\epsilon_{34}^{1}-\epsilon_{234}^{1}\right)\left(1-\epsilon_{34}^{2}\right)}{\left(1-\epsilon_{3}^{2}\right)\left(\epsilon_{34}^{2}+q\left(\epsilon_{34}^{1}-\epsilon_{34}^{2}\right)-1\right)^{2}\left(1-\epsilon_{234}^{1}\right)},
\]
which is non-negative since $\epsilon_{3}^{1}\geq\epsilon_{3}^{2}$.
Hence the function $\phi(q)$ is non-decreasing and (\ref{eq:item1})
follows. Next, from (\ref{eq:1-p3})
we compute
\begin{equation*}
\frac{1-\pi_{3}(q)}{\pi_{1}(q)}-\frac{1-\pi_{3}(0)}{\pi_{1}(0)}=
\end{equation*}
\begin{equation*}
\frac{q\left(\epsilon_{34}^{1}-\epsilon_{234}^{1}\right)\left(\epsilon_{34}^{1}-\epsilon_{34}^{2}\right)}{\left(1-\epsilon_{234}^{1}\right)\left(1-\epsilon_{34}^{2}\right)\left(1-q\epsilon_{34}^{1}-(1-q)\epsilon_{34}^{2}\right)}
\end{equation*}
and since $\epsilon_{34}^{1}\geq\epsilon_{34}^{2}$ it easily follows
that
\begin{equation}
\frac{1-\pi_{3}(q)}{\pi_{1}(q)}\geq\frac{1-\pi_{3}(0)}{\pi_{1}(0)}.\label{eq:item2}
\end{equation}
From (\ref{eq:item1}), (\ref{eq:item2}) we conclude that $f(0)\geq f(q),\ 0\leq q\leq1$
and hence, according to (\ref{eq:fin}) the throughput region of the
system is maximized when $q=0$. \end{proof}

\begin{figure}
\centering\includegraphics[scale=0.54]{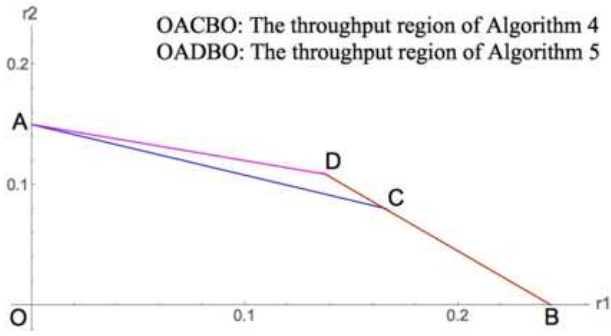}
\caption{The Throughput Regions of Algorithms 4 and 5}
\label{fig:ImprovedRegion}
\end{figure}

\section{Conclusions}
\label{sec:Conclusions}
We proposed two algorithms for Primary-Secondary user cooperation in Cognitive Networks. According to the first algorithm, when a packet sent by the Primary  transmitter is erased at the Primary receiver and received by the Secondary transmitter, the Secondary transmitter acts as relay for the packet. Depending on the channel feedback, the Secondary transmitter may send network-coded packets which allows simultaneous packet reception by the Primary and Secondary receivers. We analyzed the performance of the proposed algorithm. The results show that when compared to the case where the Secondary transmitter acts as a relay without performing network coding, significant improvement of the throughput of the secondary channel may occur.
The algorithm imposes no extra implementation requirements to the Primary transmitter apart from  listening to the feedback sent by the Secondary transmitter. The Primary receiver has the additional requirement that it stores received packets intended for the Secondary receiver and it performs decoding of network-coded packets. In return though, the stability region and the service times of all Primary Channel packets are significantly improved. A notable feature of the algorithm is that no knowledge of packet arrival rates to Primary transmitter and of channel statistics is required as long as it is known that the  Secondary transmitter to Primary receiver channel is better that the channel from the Primary transmitter to Primary receiver.

We next examined the possibility of increasing the throughput region of the system by more sophisticated network coding techniques. We presented a second algorithm which is  a generalization of the first and showed that this increase is possible in certain cases. However, in this case, knowledge of channel erasure probabilities, as well as the arrival rate of Primary transmitter packets are crucial for the algorithm to operate correctly.

It is interesting to examine whether the throughput of the Secondary channel can be increased further by more sophisticated network coding operations. Towards this direction, work is underway to examine the information theoretic capacity of the system and its relation to the throughput region of the current algorithms. Preliminary results show that the these algorithms achieve most of the capacity region of the system.

\appendices

\section{Proof of Proposition \ref{prop:StochDominance}}
\label{sec:Proof-of-PropStochDom}

The proof of stochastic dominance can be done by explicitly calculating the relevant probabilities and then showing the required inequality. We can avoid cumbersome calculations, however, by resorting to a technique commonly used in this type of proofs. Specifically, let $S^{nc}_k,\ S^c_k,\ k=1,2,...$ be the
service times of packet $k$ transmitted by node 1 under  Algorithms \ref{alg:NC-Alg} and \ref{alg:BA} respectively (these
service times are i.i.d. under both algorithms). To show
stochastic dominance, we construct on the same probability space two
random variables $\hat{S}^{nc}$ and $\hat{S}^{c}$ with the following
properties.
\begin{enumerate}
\item It holds $\hat{S}^{c}\leq\hat{S}^{nc}.$
\item $\hat{S}^{c}$and $S_{k}^{c}$ have the same distribution.
\item $\hat{S}^{nc}$and $S_{k}^{nc}$ have the same distribution.
\end{enumerate}
The fact that $S_{k}^{c}$ is stochastically smaller than $S_{k}^{nc}$
follows then immediately from the inequality $\hat{S}^{c}\leq\hat{S}^{nc}.$

We now proceed with the construction of $\hat{S}^{nc}$ and $\hat{S}^{c}$.
Consider on the same probability space a sequence $\left(\hat{Z}_{2}^{1}(t),\hat{Z}_{3}^{1}(t)\right),\ t=0,1,...$
of i.i.d pairs of random variables (for given $t$ the pair $\hat{Z}_{2}^{1}(t),\hat{Z}_{3}^{1}(t)$ may be dependent),
and a sequence $\theta (t),\ t=0,1,...$ of i.i.d. random variables,
independent of $\left(\hat{Z}_{2}^{1}(t),\hat{Z}_{3}^{1}(t)\right),\ t=0,1,....$ All
random variables take values either 0 or 1, with probabilities,
\begin{align*}
\Pr\left(\hat{Z}_{3}^{1}(t)=0\right) & =\epsilon_{3}^{1},\ \Pr\left(\hat{Z}_{2}^{1}(t)=\hat{Z}_{3}^{1}(t)=0\right)=\epsilon_{23}^{1},\\
\Pr\left(\theta (t)=0\right) & =\frac{\epsilon_{3}^{2}}{\epsilon_{3}^{1}},\ t=1,2,...
\end{align*}
Note that $\Pr\left(\theta (t)=0\right)$ is indeed a probability
because of (\ref{eq:ErasureBasicAssumption}). Let also
\begin{equation}
\hat{J} (t)=\left\{ \begin{array}{ccc}
\theta (t) & \mbox{if} & \hat{Z}_{3}^{1}(t)=0\\
\hat{Z}_{3}^{1}(t) & \mbox{if } & \hat{Z}_{3}^{1}(t)=1.
\end{array}\right.\label{eq:Correlate}
\end{equation}
From (\ref{eq:Correlate}) we see that $\hat{J}(t)$ takes values 0 or
1 and $\hat{J}(t)\geq \hat{Z}_{3}^{1}(t),\ t=0,1,2....$ Moreover, $\hat{J} (t),\ t=1.,2,...$ are i.i.d, $\hat{J}(t)$ is independent of $\hat{Z}_{3}^1(\tau),\ \hat{Z}_{3}^2(\tau),\ \tau\neq t$, and
\begin{eqnarray}
\Pr\left(\hat{J}(t)=0\right) & = & \Pr\left(\theta (t)=0,\ \hat{Z}_{3}^{1}(t)=0\right) \nonumber\\
 & = & \Pr\left(\theta (t) =0\right)\epsilon_{3}^{1}\mbox{ by indep. of \ensuremath{\hat{Z}_{3}^{1}(t)}, \ensuremath{\theta (t).}} \nonumber\\
 & = & \epsilon_{3}^{2},\label{eq:erase2to3}
\end{eqnarray}
that is, the random variables $\hat{J}(t),\ t=0,1,...$ are identically distributed to the random variables $Z_3^2(t),\ t=0,1,...$ denoting erasure events defined in Section \ref{sec:System-Model}.

Let $\hat{T}_{23}$
be the stopping time denoting the first time at least one
of the random variables $\hat{Z}_{2}^{1}(t),$
$\hat{Z}_{3}^{1}(t)$ takes the value 1, i.e.,
\begin{equation}
\hat{T}_{23}=\min_{t\geq0}\{\hat{Z}_{2}^{1}(\tau)=\hat{Z}_{3}^{1}(\tau)=0,\ \tau=0,..,t-1,
\end{equation}
\[
\ \hat{Z}_{2}^{1}(t)+\hat{Z}_{3}^{1}(t)>0\}.
\]

Let $\hat{S}^{nc}\geq \hat{T}_{23}$ be the first time that the random variable
$\hat{Z}_{3}^{1}(t)$ takes the value 1, and define $\hat{S}^{c}$ as follows.
If at time $\hat{T}_{23}$ it holds $\hat{Z}_{3}^{1}(\hat{T}_{23})=1$ then $\hat{S}^{c}=\hat{T}_{23}$.
Else $\hat{S}^{c}$ is the first time, $\hat{T}$, after $\hat{T}_{23}$ that the random
variable $\hat{J}(t)$ becomes 1. Therefore it holds,
\begin{equation}
\label{eq:fc}
    \hat{S}^{c}=\hat{T}_{23}+
    \boldsymbol{1}_{\{\hat{Z}_{3}^{1}(\hat{T}_{23})=0\}}
    (\hat{T}-\hat{T}_{23}),
\end{equation}
where $\boldsymbol{1}_{\mathcal{A}}$ is the indicator function of event $\mathcal{A}$.
Notice that the interval $\hat{T}-\hat{T}_{23}$ depends only on $\hat{J}(\hat{T}_{23}+t),\ t\geq 1$ and these variables are independent of $ \hat{T}_{23}$, since $\hat{T}_{23}$ is a stopping time for the sequence $\left(\hat{Z}_3^1(t),\hat{Z}_3^2(t)\right)$. Note also that we can write
\begin{equation}
\label{eq:fnc}
    \hat{S}^{nc}=\hat{T}_{23}+
    \boldsymbol{1}_{\{\hat{Z}_{3}^{1}(\hat{T}_{23})=0\}}
    (\hat{S}^{nc}-\hat{T}_{23}),
\end{equation}
where interval $\hat{S}^{nc}-\hat{T}_{23}$ depends only on $\hat{Z}_{3}^{1}(\hat{T}_{23}+t),\ t\geq 1$
From (\ref{eq:fc}), (\ref{eq:fnc}) and since $\hat{J}(t)\geq \hat{Z}_{3}^{1}(t),\ m=1,2....,$
it follows that,
\[
\hat{S}^{c}\leq \hat{S}^{nc}.
\]
We now examine the service times of Algorithms \ref{alg:NC-Alg} and \ref{alg:BA}. For simplicity we omit the packet index $k$ from $S_k^{nc}$ and $S_k^c$.
According to Algorithm \ref{alg:NC-Alg}, $S^{nc}$ is the first time $Z_{3}^{1}(t)$ takes the value one, where $Z_{k}^{i}(t)$ are the random variables expressing erasure events defined in Section \ref{sec:System-Model}. Since the random variables $Z_{3}^{1}(t)$ and $\hat{Z}_{3}^{1}(t),\ t=1,2,...$ are identically distributed, it follows that $S^{nc}$ and $\hat{S}^{nc}$ have the same distribution.

Let $T_{23}$
be the first time at least one
of the random variables ${Z}_{2}^{1}(t),\ {Z}_{3}^{1}(t)$ takes the value 1. From the operation of Algorithm \ref{alg:BA} it follows that if $Z_{3}^{1}(T_{23})=1$ (the transmitted packet is received by node 3 at time $T_{23}$) then $S^c=T_{23}$. Else (the transmitted packet is received by node 2 and erased at node 3) $S^c$ is the first time, $T$, after $T_{23}$ that the random variable $Z_{3}^{2}(t)$ becomes 1. From the definitions it holds,
\begin{equation}
\label{eq:fcr}
    {S}^{c}={T}_{23}+
    \boldsymbol{1}_{\{{Z}_{3}^{1}({T}_{23})=0\}}
    ({T}-{T}_{23}),
\end{equation}
where interval ${S}^{nc}-{T}_{23}$ depends only on the variables ${Z}_{3}^{2}({T}_{23}+t),\ t\geq 1$.

According to (\ref{eq:erase2to3}), and the definitions, the set of random variables $\hat{T}_{23},\ $, $\hat{J}(\hat{T}_{23}+t),\ t\geq 1$  are identically distributed to the set $T_{23}$,  ${Z}_{3}^{2}({T}_{23}+t),\ t\geq 1$. It then follows from (\ref{eq:fc}) and (\ref{eq:fcr}) that
 $S^{nc}$ and $\hat{S}^{nc}$ have the same distribution.

%*********************************
%******************************
\section{Algorithm 4: Network Coding Algorithm}
\label{sec:BNC}

In this algorithm node 2 maintains, in addition to queue $Q_2$, two single-packet buffers $B_{2,\bar{3}\bar{4}}^{1}$
and $B_{2,\bar{3}4}^{1}$ and one infinite size queue $Q_{2,3\bar{4}}$.
Node 3 maintains one infinite size queue $Q_{3,\bar{4}}^{2}$ that stores the same packets as $Q_{2,3\bar{4}}$. Also node
4 maintains one single-packet buffer $B_{4,\bar{3}}^{1}$ that stores the same packet as $B_{2,\bar{3}4}^{1}$ if the latter buffer is nonempty. The algorithm
operates as follows:
\begin{enumerate}
%item 1.
\item If $Q_{1}$, $B_{2,\bar{3}\bar{4}}^{1}$ and $B_{2,\bar{3}4}^{1}$
are empty, node 2 transmits the packet at the head of $Q_{2}$.

\begin{enumerate}
%item 1.(a)
\item If the packet is received by node 4, it is removed from $Q_{2}.$
%item 1.(b)
\item If the packet is erased at node 4 and received by node 3, it is removed
from $Q_{2}$ and placed in queue $Q_{2,3\bar{4}}.$ The packet is
also placed in $Q_{3,\bar{4}}^{2}$.
\end{enumerate}
%item 2.
\item If $Q_{1}$ is non-empty, and $B_{2,\bar{3}\bar{4}}^{1}$ and $B_{2,\bar{3}4}^{1}$
are empty, node 1 transmits the packet at the head of queue $Q_{1}$.

\begin{enumerate}
%item 2.(a)
\item If the transmitted packet is received by node 3, it is removed from
$Q_{1}$. If $B_{4,\bar{3}}^{1}$ is non-empty (node 4 has received the packet earlier) the packet in $B_{4,\bar{3}}^{1}$
is removed.
%item 2.(b)
\item If the transmitted packet is erased at node 3:

\begin{enumerate}
%item 2.(b)i.
\item If the packet is received by both nodes 2 and 4, it is stored at buffers
$B_{2,\bar{3}4}^{1}$ and $B_{4,\bar{3}}^{1}$ and it is removed from
queue $Q_{1}$.
%item 2.(b)ii.
\item If the packet is erased at node 2, received by node 4 and $B_{4,\bar{3}}^{1}$
is empty (node 4 has not received the packet earlier), the packet is placed in $B_{4,\bar{3}}^{1}$.
%item 2.(b)iii.
\item If the packet is received by node 2 and erased at node 4:

\begin{enumerate}
%item 2.(b)iii A.
\item If $B_{4,\bar{3}}^{1}$ is empty (node 4 has not received the packet earlier) then the packet is stored at buffer
$B_{2,\bar{3}\bar{4}}^{1}$ and is removed from $Q_{1}$.
%item 2.(b)iii B.
\item If $B_{4,\bar{3}}^{1}$ is non-empty (the packet has been received earlier by node 4) then the packet is stored at
buffer $B_{2,\bar{3}4}^{1}$ and it is removed from $Q_{1}$.
\end{enumerate}
\end{enumerate}
\end{enumerate}
    \item
    %item 3.
    If $B_{2,\bar{3}\bar{4}}^{1}$ is non-empty (hence $B_{2,\bar{3}4}^{1}$ and $B^{1}_{4,\bar{3}}$
are empty), node 2 transmits the packet stored in buffer $B^{1}_{2,\bar{3}\bar{4}}$.

\begin{enumerate}
%item 2.(a).
\item If the packet is received by node 3, it is removed from buffer $B^{1}_{2,\bar{3}\bar{4}}$.
\item If the packet is erased at node 3 and received by node 4, it is removed
from buffer $B^{1}_{2,\bar{3}\bar{4}}$ and placed in buffers $B^{1}_{2,\bar{3}4}$ and $B^{1}_{4,\bar{3}}$.
\end{enumerate}
%item 4
\item If $B_{2,\bar{3}4}^{1}$ is non empty (hence $B_{2,\bar{3}\bar{4}}^{1}$
is empty).

\begin{enumerate}
%item 4(a)
\item If $Q_{2,3\bar{4}}$ is empty (no coding opportunity), node 2 transmits the packet stored
in buffer $B_{2,\bar{3}4}^{1}$.

\begin{enumerate}
\item If the packet is received by node 3, it is removed $B_{2,\bar{3}4}^{1}$.
\end{enumerate}
%item 4(b)
\item \label{enu:net-code}If $Q_{2,3\bar{4}}$ is non-empty (coding opportunity), node 2
transmits the network-coded packet $p=q_{1}\oplus q_{2}$, where $q_{1}$
is the packet stored in buffers $B_{2,\bar{3}4}^{1}$ and $B_{4,\bar{3}}^{1}$, and
$q_{2}$
is the packet at the head of queues $Q_{2,3\bar{4}}$ and $Q_{3,\bar{4}}^{2}$.

\begin{enumerate}
%item 4(b)i.
\item If packet $p$ is received by node 4, node 4 decodes  packet $q_{2}=p\oplus q_1$ and $q_2$ is removed from
$Q_{2,3\bar{4}}$ and $Q_{3,\bar{4}}^{2}$.
%item 4(b)ii.
\item If packet $p$ is received by node 3, node 3 decodes packet $q_{1}=p\oplus q_2$ and $q_2$ is removed from
buffers $B_{2,\bar{3}4}^{1}$ and $B_{4,\bar{3}}^{1}$ .
\end{enumerate}
\end{enumerate}
\end{enumerate}

\section{Algorithm 5: Network Coding Algorithm with Increased Throughput Region}
\label{sec:ImprovedApp}

In this algorithm node 2 maintains, in addition to queue $Q_2$, two single-packet buffers $B_{2,\bar{3}\bar{4}}^{1}$
and $B_{2,\bar{3}4}^{1}$ and one infinite size queue $Q_{2,3\bar{4}}$.
Node 3 maintains one infinite size queue $Q_{3,\bar{4}}^{2}$ that stores the same packets as $Q_{2,3\bar{4}}$. Also node
4 maintains one single-packet buffer $B_{4,\bar{3}}^{1}$ that stores the same packet as $B_{2,\bar{3}4}^{1}$ if the latter buffer is nonempty. The algorithm
operates as follows:
\begin{enumerate}
%item 1.
\item If $Q_{1}$, $B_{2,\bar{3}\bar{4}}^{1}$ and $B_{2,\bar{3}4}^{1}$
are empty, node 2 transmits the packet at the head of $Q_{2}$.

\begin{enumerate}
%item 1.(a)
\item If the packet is received by node 4, it is removed from $Q_{2}.$
%item 1.(b)
\item If the packet is erased at node 4 and received by node 3, it is removed
from $Q_{2}$ and placed in queue $Q_{2,3\bar{4}}.$ The packet is
also placed in $Q_{3,\bar{4}}^{2}$.
\end{enumerate}
%item 2.
\item If $Q_{1}$ is non-empty, and $B_{2,\bar{3}\bar{4}}^{1}$ and $B_{2,\bar{3}4}^{1}$
are empty, node 1 transmits the packet at the head of queue $Q_{1}$.

\begin{enumerate}
%item 2.(a)
\item If the transmitted packet is received by node 3, it is removed from
$Q_{1}$. If $B_{4,\bar{3}}^{1}$ is non-empty (node 4 has received the packet earlier) the packet in $B_{4,\bar{3}}^{1}$
is removed.
%item 2.(b)
\item If the transmitted packet is erased at node 3:

\begin{enumerate}
%item 2.(b)i.
\item If the packet is received by both nodes 2 and 4, it is stored at buffers
$B_{2,\bar{3}4}^{1}$ and $B_{4,\bar{3}}^{1}$ and it is removed from
queue $Q_{1}$.
%item 2.(b)ii.
\item If the packet is erased at node 2, received by node 4 and $B_{4,\bar{3}}^{1}$
is empty (node 4 has not received the packet earlier), the packet is placed in $B_{4,\bar{3}}^{1}$.
%item 2.(b)iii.
\item If the packet is received by node 2 and erased at node 4:

\begin{enumerate}
%item 2.(b)iii A.

\item \label{it:case1}\emph{ If $B_{4,\bar{3}}^{1}$ is empty (node 4 has not received the packet earlier) then the packet is stored at buffer
$B_{2,\bar{3}\bar{4}}^{1}$.}

%******* and is removed from $Q_{1}$.
%item 2.(b)iii B.
 \item If $B_{4,\bar{3}}^{1}$ is non-empty (the packet has been received earlier by node 4) then the packet is stored at
buffer $B_{2,\bar{3}4}^{1}$ and it is removed from $Q_{1}$.
\end{enumerate}
\end{enumerate}
\end{enumerate}
    \item
    \label{it:case2}
    %item 3.
    \emph{
    If $B_{2,\bar{3}\bar{4}}^{1}$ is non-empty ( $B_{2,\bar{3}4}^{1}$ and $B^{1}_{4,\bar{3}}$
are empty), node 1 transmits the head of line packet in queue $Q_1$ with probability $q$ of node 2 transmits the (same) packet stored in buffer $B^{1}_{2,\bar{3}\bar{4}}$ with probability $1-q$.}

\begin{enumerate}
%item 2.(a).
\item \label{it:case3}\emph{If the packet is received by node 3, it is removed from buffer $B^{1}_{2,\bar{3}\bar{4}}$ and queue $Q_1$.}
\item \label{it:case4} \emph{If the packet is erased at node 3 and received by node 4, it is removed
from buffer $B^{1}_{2,\bar{3}\bar{4}}$ and queue $Q_1$, and placed in buffers $B^{1}_{2,\bar{3}4}$ and $B^{1}_{4,\bar{3}}$.}
\end{enumerate}
%item 4
\item If $B_{2,\bar{3}4}^{1}$ is non empty (hence $B_{2,\bar{3}\bar{4}}^{1}$
is empty).

\begin{enumerate}
%item 4(a)
\item If $Q_{2,3\bar{4}}$ is empty (no coding opportunity), node 2 transmits the packet stored
in buffer $B_{2,\bar{3}4}^{1}$.

\begin{enumerate}
\item If the packet is received by node 3, it is removed $B_{2,\bar{3}4}^{1}$.
\end{enumerate}
%item 4(b)
\item \label{enu:net-code1}If $Q_{2,3\bar{4}}$ is non-empty (coding opportunity), node 2
transmits the network-coded packet $p=q_{1}\oplus q_{2}$, where $q_{1}$
is the packet stored in buffers $B_{2,\bar{3}4}^{1}$ and $B_{4,\bar{3}}^{1}$, and
$q_{2}$
is the packet at the head of queues $Q_{2,3\bar{4}}$ and $Q_{3,\bar{4}}^{2}$.

\begin{enumerate}
%item 4(b)i.
\item If packet $p$ is received by node 4, node 4 decodes  packet $q_{2}=p\oplus q_1$ and $q_2$ is removed from
$Q_{2,3\bar{4}}$ and $Q_{3,\bar{4}}^{2}$.
%item 4(b)ii.
\item If packet $p$ is received by node 3, node 3 decodes packet $q_{1}=p\oplus q_2$ and $q_2$ is removed from
buffers $B_{2,\bar{3}4}^{1}$ and $B_{4,\bar{3}}^{1}$ .
\end{enumerate}
\end{enumerate}
\end{enumerate}
\bibliographystyle{IEEEtran}
\bibliography{References}

\end{document}